\documentclass[11pt]{article}

\usepackage[margin=1in]{geometry}
\usepackage[utf8]{inputenc}
\usepackage[T1]{fontenc}

\usepackage{amsmath,amssymb,amsthm}
\usepackage{booktabs}
\usepackage[ruled]{algorithm2e}

\SetAlFnt{\small}
\SetAlCapFnt{\small}
\SetAlCapNameFnt{\small}
\SetAlCapHSkip{0pt}
\IncMargin{-\parindent}

\usepackage{thm-restate}
\usepackage{enumitem}
\usepackage{mathtools}
\usepackage{multirow}
\usepackage{dsfont}
\usepackage{upgreek}
\usepackage{comment}
\usepackage{bbm}
\usepackage{makecell}
\usepackage{hyperref}
\usepackage{xcolor}
\usepackage[numbers,sort&compress]{natbib}
\setcitestyle{numbers,square}
\allowdisplaybreaks[1]

\hypersetup{
    colorlinks=true,
    linkcolor=blue,
    citecolor=blue,
    urlcolor=blue,
}

\newcommand{\optimizer}{\mathcal{E}}

\newcommand{\probability}{\mathbb{P}}

\newcommand{\D}{\mathcal{D}}

\newcommand{\U}{\mathcal{U}}
\newcommand{\E}{\mathbb{E}}
\newcommand{\s}{\sigma_{\beta}}
\newcommand{\C}{\textsc{Copeland} }
\newcommand{\Cc}{\textsc{Copeland}}
\newcommand{\B}{\textsc{Borda} }
\newcommand{\Bb}{\textsc{Borda}}
\newcommand{\R}{\textsc{RandomDictator} }
\newcommand{\Rr}{\textsc{RandomDictator}}
\newcommand{\Plu}{\textsc{Plurality} }
\newcommand{\Plup}{\textsc{Plurality}}
\newcommand{\PluV}{\textsc{PluralityVeto} }
\newcommand{\PluVv}{\textsc{PluralityVeto}}
\newcommand{\PPV}{\textsc{PrunedPluralityVeto} }
\newcommand{\PPVv}{\textsc{PrunedPluralityVeto}}
\newcommand{\ML}{\textsc{MaximalLotteries} }

\DeclareMathOperator*{\argmax}{arg\,max}

\theoremstyle{plain}
\newtheorem{theorem}{Theorem}[section]
\newtheorem{lemma}[theorem]{Lemma}

\newtheorem{corollary}[theorem]{Corollary}

\newtheorem{openq}{Open Question}

\theoremstyle{definition}
\newtheorem{definition}[theorem]{Definition}

\theoremstyle{remark}

\title{\textbf{Utilitarian Distortion Under Probabilistic Voting}}

\author{
    Hamidreza Alipour\thanks{Authors listed in alphabetical order.}\\
    \textit{Sharif University of Technology}\\
    \texttt{hmr.alipour@gmail.com}
    \and 
    Mohak Goyal\\
    \textit{Stanford University}\\
    \texttt{goyalmohak2@gmail.com}
}

\date{\today}

\begin{document}

\maketitle

    \begin{abstract}

The utilitarian distortion framework evaluates voting rules by their worst-case efficiency loss when voters have cardinal utilities but express only ordinal rankings. Under the classical model, a longstanding tension exists: Plurality, which discards most of the ranking information and suffers from the spoiler effect, achieves optimal $\Theta(m^2)$ distortion among deterministic rules, while normatively superior rules like Copeland and Borda have unbounded distortion. We resolve this tension under probabilistic voting with the Plackett-Luce model, where rankings are noisy reflections of utilities governed by an inverse temperature parameter $\beta$, and show that the distortion landscape changes dramatically. Copeland and Borda both achieve at most $\beta\frac{1+e^{-\beta}}{1-e^{-\beta}}$ distortion, independent of the number of candidates $m$, and within a factor of 2 of the lower bound for randomized rules satisfying the probabilistic Condorcet loser criterion known from prior work. This improves upon the prior $O(\beta^2)$ bound for Borda. These upper bounds are nearly tight: prior work establishes a $(1-o(1))\beta$ lower bound for Borda, and we prove a $(1-\epsilon)\beta$ lower bound for Copeland for any constant $\epsilon >0$. In contrast, rules that rely only on top-choice information fare worse: Plurality has distortion $\Omega(\min(e^\beta, m))$ and Random Dictator has distortion $\Theta(m)$. Additional `veto' information is also insufficient to remove the dependence on $m$; Plurality Veto and Pruned Plurality Veto have distortion $\Omega(\beta \ln m)$. We also prove a lower bound of $(\frac{5}{8}-\epsilon)\beta$ (for any constant $\epsilon >0$) for all deterministic finite-precision tournament-based rules, a class that includes Copeland and any rule based on pairwise comparison margins rounded to fixed precision. Our results show that the distortion framework aligns with normative intuitions once the probabilistic nature of real-world voting is taken into account.
\end{abstract}


\section{Introduction}

The \emph{utilitarian distortion} framework, introduced by \citet{procaccia2006distortion}, provides a way to evaluate voting rules when voters have cardinal utilities but can only express ordinal rankings. The \emph{distortion} of a voting rule is the worst-case ratio between the social welfare of the optimal candidate and that of the selected candidate. A rule with distortion $\alpha$ guarantees that the selected candidate achieves at least a $1/\alpha$ fraction of the optimal welfare, regardless of the underlying utility profile. This framework has attracted significant attention as it provides robust, distribution-free guarantees on the quality of collective decisions.

A notable tension exists between distortion and normative properties of voting rules. With $m$ candidates, under the classical model, where voters rank candidates deterministically according to their utilities, \Plu achieves $\Theta(m^2)$ distortion in the unit-sum model, which is optimal among deterministic rules~\citep{caragiannis2011voting}. In contrast, in the same unit-sum model, \C and \B have unbounded distortion despite their strong normative appeal~\citep{procaccia2006distortion, gkatzelis2023best}. That is, there exists no finite function of $m$ bounding the distortion over all numbers of voters. \C satisfies the Condorcet criterion, while \B has been axiomatically characterized as uniquely satisfying several desirable properties~\citep{young1974axiomatization, maskin2025borda}. Yet under the distortion criterion, \Plup, which suffers from the spoiler effect, outperforms these rules. While the distortion guarantee is appealing in principle, these counterintuitive findings have limited its influence on practical voting system design.

We show that this tension is resolved under a more realistic model of voter behavior. Following the recent work of \citet{goyal2025metric} on metric distortion and \citet{golz2025distortion} on utilitarian distortion, we study voting under the PL model~\citep{luce2005individual, plackett1975analysis}, where each voter's ranking is a probabilistic function of their utilities. Under this model, the probability that a voter ranks candidate $A$ above candidate $B$ is given by the logistic sigmoid of the difference in utilities. The parameter $\beta \geq 1$ captures the inverse \emph{temperature}: higher $\beta$ corresponds to less noise, so rankings more closely reflect true utilities. This model, widely used in econometrics and machine learning, captures the realistic assumption that voters may have uncertainty or make errors when translating preferences into rankings.

Under probabilistic voting, the distortion landscape changes dramatically: \C and \B achieve distortion independent of the number of candidates $m$ and linear in $\beta$, while \Plup, \PluVv, \PPVv, and \R (Definitions in \S~\ref{sec:voting_rules}) have distortion growing with $m$ or exponential in $\beta$. The distortion framework thus aligns with normative intuitions once we account for the probabilistic nature of real-world voting.

Our work contributes to the growing literature connecting social choice theory and AI alignment~\citep{conitzer2024social, mishra2023ai, sorensen2024roadmap}. In reinforcement learning from human feedback (RLHF)~\citep{ouyang2022training, stiennon2020learning}, human preferences are elicited through pairwise comparisons, typically modeled using Bradley-Terry, which is the pairwise marginal of the Plackett-Luce (PL) model. Methods based on aggregating pairwise comparisons, including Borda-style approaches, have been used in practice~\citep{siththaranjan2023distributional}. Our results provide theoretical support for such methods in the form of robust welfare guarantees.

\subsection{Our Contributions}
\label{sec:contributions}

Our main results are summarized in Table~\ref{tab:results}. We establish that probabilistic voting fundamentally changes which voting rules achieve good distortion. Formal definitions appear in \S\ref{sec:prelims}.

\textit{Upper Bounds for \C and \Bb.}
We prove that both rules achieve distortion at most $\beta \frac{1+e^{-\beta}}{1-e^{-\beta}}$, which, crucially, is independent of the number of candidates $m$ (Theorems~\ref{thm:copeland_ub} and~\ref{thm:borda_ub}). This places these rules within a factor of 2 of the lower bound $\frac{\beta}{2} \frac{1+e^{-\beta}}{1-e^{-\beta}}$ given by \citet{golz2025distortion} for \textit{randomized} rules satisfying the probabilistic Condorcet loser criterion (PCLC) (Definition~\ref{def:pclc}). Our bound for \B improves upon the $O(\beta^2)$ bound from~\citet{golz2025distortion}.

Perhaps surprisingly, \C and \B achieve identical distortion bounds. This contrasts with metric distortion, where \C significantly outperforms \B~\citep{anshelevich2015approximating, goyal2025metric}. This result is noteworthy since \B has been the method of choice in preference learning~\citep{siththaranjan2023distributional}, which is arguably more similar to the utilitarian setting than metric distortion.

\paragraph{Near-Tight Lower Bounds for \C and \Bb.}
We prove that \C has distortion at least $(1-\epsilon)\beta$ for any constant $\epsilon >0$, as $\beta \to \infty$ (Theorem~\ref{thm:copeland_lb}). Combined with the $(1-o(1))\beta$ lower bound for \B due to \citet{golz2025distortion}, this shows our upper bounds are near-tight.

\paragraph{Bounds for \Plup, \PluVv, \PPVv, and \Rr.}
We prove that rules that do not fully utilize information have high distortion:
\begin{itemize}[leftmargin=*]
    \item \Plu has distortion $\Omega(\min(e^\beta, m))$ (Theorem~\ref{thm:plurality_lb}). The proof leverages the fact that it continues to suffer from the spoiler effect under probabilistic voting.
    \item \PluV and \PPV have distortion $\Omega(\beta \ln m)$ (Theorem~\ref{thm:pluV_lb} and Corollary~\ref{corr:ppv}). This contrasts with classical frameworks, since \PluV achieves optimal metric distortion of 3 among deterministic rules \citep{kizilkaya2022plurality} and \PPV is orderwise optimal among deterministic rules for both utilitarian and metric distortion \citep{gkatzelis2023best}.
    \item \R has distortion $\Theta(m)$ (Theorem~\ref{thm:rd_lb}).
\end{itemize}
We also establish an upper bound of $O(\min(\frac{e^{2\beta}}{\beta}, \frac{me^\beta}{\ln m}))$ for \Plu (Theorem~\ref{thm:plurality_ub}).

\paragraph{Lower Bound for Finite-Precision Tournament Rules.}
We prove that any deterministic voting rule whose output depends only on a finite-precision weighted tournament graph has distortion at least $(\frac{5}{8} - \epsilon)\beta$ for sufficiently large $\beta$ and any constant $\epsilon >0$ (Theorem~\ref{thm:detTournamentGraph_lb}). This class includes rules based on the unweighted tournament graph (who beats whom) and rules that round margins to a fixed grid.
This establishes a gap between deterministic and randomized tournament-based rules since the randomized \ML rule has distortion of $\frac{\beta}{2} \frac{1+e^{-\beta}}{1-e^{-\beta}}$~\citep{golz2025distortion}.

\paragraph{Technical Overview.}
Our proofs require careful analysis of the logistic sigmoid function $\sigma_\beta(x) = (1+e^{-\beta x})^{-1}$, which governs pairwise comparison probabilities. The non-convexity of $\sigma_\beta$ makes standard linearization arguments insufficient. For the \C and \B upper bounds, we exploit the concavity of $\sigma_\beta$ on $[0,\infty)$ and its convexity on $(-\infty,0]$ to obtain tight approximations of expected vote shares. For \Plup, which requires bounding top-choice probabilities rather than pairwise comparisons, we construct convex surrogate functions of $\s$ that enable the analysis. 

Our lower bound constructions carefully balance voter utilities across multiple voter types to ensure the voting rule selects a low-welfare candidate with high probability in the large-election limit. Each construction illuminates a weakness pattern for the voting rule. \Plu has the spoiler effect, \R lets everyone pick their favorite candidate, which may be very bad overall, \PluV and \PPV give an extreme penalty for being at the end of a rank list, and \C discards the victory margins in pairwise comparisons.

\paragraph{Comparison with Classical Distortion.}
The classical results primarily~\citep{procaccia2006distortion, caragiannis2011voting, gkatzelis2023best} use the \emph{unit-sum} model, where $\sum_j u_{i,j} = 1$ for each voter. We work with \emph{bounded-range} utilities $u_{i,j} \in [0,1]$, which is natural for the PL model. Since unit-sum is a special case of bounded-range, our upper bounds directly apply to the unit-sum setting as well.

\begin{table}[t]
\centering
\setlength{\tabcolsep}{4pt}
\begin{tabular}{lcc}
\toprule
\textbf{Voting Rule} & \textbf{Upper Bound} & \textbf{Lower Bound} \\
\midrule
\B & $\beta\,\frac{1+e^{-\beta}}{1-e^{-\beta}}$ (Thm~\ref{thm:borda_ub}) & $(1-o(1))\beta$~(\citet{golz2025distortion}) \\[4pt]
\C & $\beta\,\frac{1+e^{-\beta}}{1-e^{-\beta}}$ (Thm~\ref{thm:copeland_ub}) & $(1-\epsilon)\beta$ (Thm~\ref{thm:copeland_lb}) \\[4pt]
\Plu & $\min\!\big(\frac{e^{2\beta}}{\beta},\, \frac{m e^{\beta}}{\ln{(m-1)}+2}+1\big)$ (Thm~\ref{thm:plurality_ub}) & $\min\!\big(\frac{e^{\beta}}{2+\epsilon}-1,\, \frac{m}{2+\epsilon}-1\big)$ (Thm~\ref{thm:plurality_lb}) \\[4pt]
\PluV & -- & $\frac{\beta \ln m}{6}$ (Thm~\ref{thm:pluV_lb}) \\[4pt]
\PPV & -- & $\frac{\beta \ln m}{6}$ (Cor~\ref{corr:ppv}) \\[4pt]
\R & $m e^\beta$ (Thm~\ref{thm:rd_ub}) & $(1-\epsilon)m$ (Thm~\ref{thm:rd_lb}) \\
\bottomrule
\end{tabular}
\caption{Utilitarian distortion bounds under Plackett-Luce voting with parameter $\beta$, as $n \to \infty$. Utilities lie in $[0,1]$. For \Plup, \C and \Rr, $\epsilon$ is any constant $>0$.}
\label{tab:results}
\end{table}

\subsection{Related Work}
\label{sec:related}

\paragraph{Probabilistic Voting and Random Utility Models.}
In economics, probabilistic voting models~\citep{coughlin1992probabilistic, quinn1999voter, mckelvey2006theory} assume that voters choose among candidates based on noisy utilities, reflecting uncertainty about candidates' positions. In the preference learning literature, this perspective is captured by random utility models (RUMs), such as the Plackett-Luce and Bradley-Terry models~\citep{bradley1952rank, plackett1975analysis, azari2012random, parkes2012random, caragiannis2016noisy}. These models have been used for statistical estimation of social rankings \cite{xia2013designing} and for axiomatic characterizations of voting rules as maximum likelihood estimators \cite{truchon2008borda}.

\paragraph{Distortion of Voting Rules: Classical and Probabilistic}
Most early work on utilitarian distortion considers the unit-sum model where each voter's utilities sum to one~\citep{procaccia2006distortion, boutilier2015optimal, caragiannis2011voting}. A parallel line studies metric distortion, where voters and candidates lie in a shared metric space~\citep{anshelevich2015approximating, goel2017metric, charikar2024breaking}. \citet{gkatzelis2023best} designed voting rules attaining near-optimal distortion in both settings.

\citet{goyal2025metric} introduced the paradigm of studying distortion under probabilistic voting, focusing on metric distortion under the Plackett-Luce model. They showed that normatively appealing rules like \C achieve improved metric distortion bounds when accounting for probabilistic voter behavior, while rules like \R and \PluV perform worse. \citet{golz2025distortion} later studied this problem under utilitarian distortion. They proved an $O(\beta^2)$ upper bound and a $(1-o(1))\beta$ lower bound for \Bb, and established that the \ML rule achieves the optimal distortion of $\frac{\beta}{2}\frac{1+e^{-\beta}}{1-e^{-\beta}}$ among randomized rules satisfying PCLC.

Together with the results of \citet{goyal2025metric}, our work shows that \C is close to the ``best of both distortion worlds'' under probabilistic voting among deterministic rules.

A complementary line of work improves distortion by giving the mechanism more than just rankings, for example via cardinal queries~\citep{abramowitz2019awareness, amanatidis2021peeking}, threshold approvals~\citep{anshelevich2024improved}, or deliberation~\citep{goel2025metric}. In our model, voters still submit ordinal rankings, but we crucially exploit the structure of the probabilistic mapping from utilities to rankings.

\paragraph{Distortion and AI Alignment.}
The connection between social choice and AI alignment has been emphasized in several recent papers~\citep{conitzer2024social, mishra2023ai, sorensen2024roadmap}. RLHF~\citep{ouyang2022training, stiennon2020learning} is the dominant paradigm for aligning large language models; however, most methods do not account for the fact that humans may have fundamentally different preferences. This has led to growing interest in alignment under preference heterogeneity~\citep{chidambaram2024direct, siththaranjan2023distributional, wang2025mpo, zhong2024provable}. As argued by \citet{golz2025distortion}, the social choice community can offer insights for designing RLHF pipelines that learn representative policies with theoretical guarantees. The distortion framework is particularly relevant since it provides robust guarantees independent of the preference distribution.


\section{Preliminaries}
\label{sec:prelims}


Let $C$ be a set of $m$ candidates (or alternatives) and let $V$ be a set of $n$ voters. Each voter $i \in V$ has a utility vector $\mathbf{u}_i = (u_{i,1}, \ldots, u_{i,m})$, where $u_{i,j} \in [0,1]$ denotes the utility of voter $i$ for candidate $j$. We assume voters are drawn independently from a distribution $\mathcal{V}$ over utility vectors.

We denote the total utility of candidate $j$ by $\U_j = \sum_{i \in V} u_{i,j}$, and the expected utility by $\textsc{Util}_j = \E_{\mathbf{u}_{\cdot} \sim \mathcal{V}}[u_{\cdot,j}]$. All our results are stated for the limit $n \to \infty$, where, by the strong law of large numbers (SLLN), $\frac{\U_j}{n} \to \textsc{Util}_j$ almost surely.

\subsection{Probabilistic Voting: The Plackett-Luce Model}
\label{sec:pl}

Voters have cardinal utilities but submit ordinal rankings $\pi$ over $C$. We model the generation of rankings from utilities using a probabilistic voting model, capturing the idea that voters' expressed preferences are noisy reflections of their true utilities.

\paragraph{Random Utility Interpretation.}
A natural way to model noisy ranking behavior is through random utility models (RUMs). Suppose each voter $i$ perceives a noisy utility $\hat{u}_{i,j} = u_{i,j} + \varepsilon_{i,j}$ for each candidate $j$, where $\{\varepsilon_{i,j}\}$ are i.i.d.\ noise terms. The voter then ranks candidates in decreasing order of perceived utilities. When the noise terms follow a Gumbel (Type-I extreme value) distribution with scale $1/\beta$, this procedure induces exactly the Plackett-Luce distribution over rankings~\citep{luce2005individual, plackett1975analysis}. The parameter $\beta \geq 1$ controls the noise level: higher $\beta$ corresponds to less noise, so rankings more closely reflect the true utility ordering.

\begin{definition}[Plackett-Luce Model]
\label{def:pl}
Under the Plackett-Luce model with parameter $\beta$, a ranking is constructed sequentially: at each step, a candidate is selected from the remaining candidates with probability proportional to their ``strength'' $\eta_{i,j} = e^{\beta u_{i,j}}$. That is, voter $i$ selects their top choice $j$ with probability $\frac{e^{\beta u_{i,j}}}{\sum_{k \in C} e^{\beta u_{i,k}}}$, then selects their second choice from the remaining candidates according to the same rule, and so on.
\end{definition}

\paragraph{Pairwise Comparisons and Bradley-Terry.}
A key consequence of the Plackett-Luce model is that pairwise comparison probabilities take a simple form. For any pair of candidates $X$ and $Z$:
\begin{equation}
\label{eq:plackett-luce}
\Pr(X >_i Z) = \sigma_\beta(u_{i,X} - u_{i,Z}),
\end{equation}
where $\sigma_\beta(x) = \frac{1}{1 + e^{-\beta x}}$ is the logistic sigmoid function. This is precisely the Bradley-Terry model for pairwise comparisons~\citep{bradley1952rank}. Note that the pairwise probability depends only on the \emph{difference} in utilities, consistent with the additive noise interpretation.

The probability that candidate $j$ is ranked first by voter $i$ is:
\begin{equation}
\label{eq:top}
\Pr(\top_i = j) = \frac{e^{\beta u_{i,j}}}{\sum_{k \in C} e^{\beta u_{i,k}}}.
\end{equation}

Let $\bot_i$ be the candidate ranked last by voter $i$. $\Pr(\bot_i = j)$ is determined by the sequential construction: $j$ is ranked last iff another candidate is selected at each step of ranking generation. 

The function $\sigma_\beta$ has several properties that we exploit in our proofs:
\begin{enumerate}
    \item $\sigma_\beta(x)$ is increasing for all $x$, concave for $x \geq 0$, and convex for $x \leq 0$.
    \item $\sigma_\beta'(x) = \beta \sigma_\beta(x)(1 - \sigma_\beta(x))$, with maximum value $\frac{\beta}{4}$ attained at $x = 0$.
    \item $\sigma_\beta(0) = \frac{1}{2}$ and $\sigma_\beta(x) + \sigma_\beta(-x) = 1$ for all $x$.
\end{enumerate}

\subsection{Voting Rules}
\label{sec:voting_rules}
A voting rule aggregates the rankings into a winning candidate (or a distribution over candidates).

\begin{definition}[Voting Rule]
A voting rule $f: \Pi^n \to \Delta(C)$ takes a profile of $n$ rankings and outputs a probability distribution over candidates. A rule is \emph{deterministic} if it always outputs a point mass, i.e., selects a single winner.
\end{definition}

We study the following voting rules. Let $s_{j,j'} = \frac{1}{n}|\{i \in V : j >_i j'\}|$ denote the fraction of voters who rank $j$ above $j'$, and let $t_j = \frac{1}{n}|\{i \in V : \top_i = j\}|$ denote the fraction of voters who rank $j$ first.

\begin{definition}[Random Dictator]
\label{def:rd}
Select a voter uniformly at random and output their top choice. That is, $\R(\pi^n) = \mathbf{p}$ where $p_j = t_j$.
\end{definition}

\begin{definition}[Plurality]
\label{def:plurality}
Select the candidate who is the top choice of the most voters:

$\Plu(\pi^n) = \argmax_{j \in C} t_j$.
\end{definition}

\begin{definition}[Copeland]
\label{def:copeland}
Let $\triangleright$ be a fixed strict ordering over the candidates used for tie-breaking. Select the candidate who wins the most pairwise comparisons, i.e, has the highest Copeland score:

$$
\C(\pi^n) = \argmax_{j \in C} \sum_{j' \in C \setminus \{j\}} \left[ \mathbbm{1}\left(s_{j,j'} > \frac{1}{2}\right) + \mathbbm{1}\left(s_{j,j'} = \frac{1}{2} \land j \triangleright j'\right) \right].
$$
\end{definition}

\begin{definition}[Borda]
\label{def:borda}
The Borda score of candidate $j$ is $\textsc{Borda}(j) = n \sum_{j' \neq j} s_{j,j'}$. The Borda rule selects the candidate with the highest Borda score: $\B(\pi^n) = \argmax_{j \in C} \textsc{Borda}(j)$.
\end{definition}

\begin{definition}[Plurality Veto]
\label{def:plurality-veto}
The rule proceeds in two phases. In the first phase, each voter places a token on their top choice. In the second phase, voters sequentially (in an arbitrary but fixed order) remove a token from the candidate ranked last among those with tokens remaining. The candidate with a token remaining after $n-1$ removals wins.
\end{definition}

\begin{definition}[Pruned Plurality Veto]
\label{def:pruned-plurality-veto}
For a constant $\alpha > 0$, first restrict to candidates with at least $\frac{\alpha n}{(6+\alpha)m}$ top-choice votes. Then run \PluV on the restricted candidate set.
\end{definition}

\subsection{Utilitarian Distortion}

We measure the quality of a voting rule by its \emph{utilitarian distortion}, which compares the social welfare of the chosen candidate to that of the optimal candidate.

\begin{definition}[Utilitarian Distortion under Probabilistic Voting]
\label{def:distortion}
The utilitarian distortion of a voting rule $f$ under probabilistic voting model $\mathcal{P}$ is:
\begin{equation}
\D_{\mathcal{P}}(f) = \sup_{\mathcal{V}} \limsup_{n \to \infty} \frac{\max_{j \in C}\textsc{Util}_j}{\textsc{Util}(f)},
\end{equation}
where $\textsc{Util}(f)=\E_{\mathbf{u} \sim \mathcal{V}^n,\, \pi^n \sim \mathcal{P}(\mathbf{u})} \left[\frac{\U(f(\pi^n))}{n}\right]$. Here, the expectation is over the randomness in the input rankings and the potential randomness in the voting rule $f$. 
\end{definition}

A distortion of $\alpha$ guarantees that the chosen candidate achieves at least a $\frac{1}{\alpha}$ fraction of the optimal social welfare, in the worst case over all voter distributions. We denote the socially optimal candidate (i.e., $\argmax_j \textsc{Util}_j$) by $B$. We drop the subscript $\mathcal{P}$ when clear from context.

\subsection{Population Quantities and Large Elections}\label{sec:largeElec}

Since our analysis is in the limit $n \to \infty$, we often work with population quantities. For a distribution $\mathcal{V}$ and parameter $\beta$, define the population pairwise probability and top-choice probability:
\begin{align}
p_{j,j'} &:= \Pr_{\mathbf{u}_{\cdot} \sim \mathcal{V},\, \pi \sim \mathcal{P}(\mathbf{u}_{\cdot})}(j > j') = \E_{\mathbf{u}_{\cdot} \sim \mathcal{V}}[\sigma_\beta(u_{\cdot,j} - u_{\cdot,j'})], \\
p_j^{\textsc{top}} &:= \Pr_{\mathbf{u}_{\cdot} \sim \mathcal{V},\, \pi \sim \mathcal{P}(\mathbf{u}_{\cdot})}(\top = j) = \E_{\mathbf{u}_{\cdot} \sim \mathcal{V}}\left[\frac{e^{\beta u_{\cdot,j}}}{\sum_{k \in C} e^{\beta u_{\cdot,k}}}\right].
\end{align}

By standard concentration arguments (SLLN), the empirical quantities $s_{j,j'}$, $t_j$, and $\frac{\U_j}{n}$ converge almost surely to their population counterparts $p_{j,j'}$, $p_j^{\textsc{top}}$, and $\textsc{Util}_j$ as $n \to \infty$. 


\subsection{Connection to Smoothed Analysis}
\label{sec:smooth}
In smoothed analysis~\citep{spielman2004smoothed}, an adversary selects a worst-case instance, and then ``nature'' perturbs it before processing. In social choice, this paradigm has been used to study the satisfiability of axioms that fail in the worst case~\citep{xia2020smoothed, flanigan2023smoothed, xia2023semi} and the runtime of voting rules~\citep{liu2022semi, xia2021smoothed, xia2022beyond}. Our work can be viewed as a smoothed analysis of utilitarian distortion guarantees.

Recall from \S\ref{sec:pl} that Plackett-Luce rankings arise from additive Gumbel perturbations: each voter $i$ perceives noisy utilities $\hat{u}_{i,j} = u_{i,j} + \varepsilon_{i,j}$, where $\{\varepsilon_{i,j}\}$ are i.i.d.\ Gumbel with scale $1/\beta$, and ranks candidates by decreasing $\hat{u}_{i,j}$. In smoothed analysis terms, the adversary chooses the utility profile $\{u_{i,j}\}$, and nature adds the Gumbel noise before rankings are generated.

Our setting differs from classical smoothed analysis in that distortion is evaluated with respect to the underlying (unperturbed) utilities $u$, while randomness enters only through the mapping to rankings. On the other hand, in smoothed analysis, it is done on the perturbed input. Nevertheless, there is a close connection. If we evaluate welfare on the noisy utilities $\hat{u}$, then for each candidate,
\[
\sum_{i \in V} \hat{u}_{i,j} = \U_j + \sum_{i \in V} \varepsilon_{i,j}.
\]
With a centered choice $\E[\varepsilon_{i,j}] = 0$, the strong law of large numbers gives $\frac{1}{n} \sum_{i \in V} \varepsilon_{i,j} \to 0$ almost surely. Hence, in large elections, the social welfare of candidate $j$ under $\hat{u}$ concentrates around $\U_j$. Consequently, whenever the denominator is bounded away from zero, ratios of social welfares under $\hat{u}$ converge to those under $u$ as $n \to \infty$.


Before we discuss our main results, we recall the lower bound for randomized rules that satisfy the probabilistic Condorcet loser criterion \citet{golz2025distortion}.
 It says that some amount of distortion is unavoidable and this bound grows linearly in the parameter $\beta$. That is, lower-noise regimes  have more unavoidable loss in how much social welfare can be guaranteed from observing mere ordinal preferences. 
For comparison, in the classical model with unit-sum utilities and deterministic voting, the lower bound for any deterministic rule is $\Omega(m^2)$, while for randomized rules it is $\Omega(\sqrt{m})$.
\begin{definition} \label{def:pclc}[Probabilistic Condorcet Loser Criterion (PCLC)]
   A rule satisfies PCLC if the probability that a Condorcet loser wins is less than $1/m$. A Condorcet loser is a candidate who loses to all other candidates in pair-wise comparisons.
\end{definition}
 Many natural voting rules, e.g., \ML and \Cc, satisfy PCLC. \Rr is an example of those that do not.
\begin{lemma}[Lower Bound~\citep{golz2025distortion}] \label{lem:general-lb}
For any social choice function $f$, which satisfies the Probabilistic Condorcet Loser Criterion, 
    $\D(f) \geq \frac{\beta}{2} \frac{1+e^{-\beta}}{1-e^{-\beta}}.$
\end{lemma}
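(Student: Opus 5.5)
We will exhibit a two-candidate instance, together with a small perturbation, on which every rule satisfying PCLC is forced to place substantial probability on a low-welfare candidate. Take $m=2$ with candidates $A$ and $B$. Take two voter types:
\begin{itemize}
\item a fraction $q$ of voters have $u_A=1$, $u_B=0$;
\item a fraction $1-q$ have $u_A=0$, $u_B=1$.
\end{itemize}
Then $\textsc{Util}_A=q$ and $\textsc{Util}_B=1-q$. The pairwise probability is
\[
p_{A,B}=q\,\sigma_\beta(1)+(1-q)\,\sigma_\beta(-1).
\]
We choose $q$ slightly above the threshold $q^*$ at which $p_{A,B}=\tfrac12$. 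Solving $q\sigma_\beta(1)+(1-q)(1-\sigma_\beta(1))=\tfrac12$ gives $q^*=\tfrac12$, which is useless, since both candidates then have equal welfare. So the pure-extreme instance must be replaced by asymmetric utilities that exploit the gap between linear welfare and the sigmoid.

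The refined construction uses two types:
\begin{itemize}
\item a fraction $q$ with $(u_A,u_B)=(1,0)$, i.e.\ a utility difference of $+1$;
\item a fraction $1-q$ with $(u_A,u_B)=(0,\delta)$ for small $\delta>0$, i.e.\ a difference of $-\delta$.
\end{itemize}
Then $A$ wins the pairwise comparison iff
\[
q\,\sigma_\beta(1)+(1-q)\,\sigma_\beta(-\delta)>\tfrac12 .
\]
Using $\sigma_\beta(1)-\tfrac12=\tfrac12\tanh(\beta/2)=\tfrac12\frac{1-e^{-\beta}}{1+e^{-\beta}}$ and $\tfrac12-\sigma_\beta(-\delta)\approx \beta\delta/4$, the threshold is
\[
\frac{q}{1-q}\approx \frac{\beta\delta}{2}\cdot\frac{1+e^{-\beta}}{1-e^{-\beta}}.
\]
Just below it, $B$ beats $A$, so $A$ is the Condorcet loser. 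With two candidates, PCLC then requires $A$ to win with probability less than $1/2$, so $B$ wins with probability greater than $1/2$. The welfare ratio is
\[
\frac{\textsc{Util}_A}{\textsc{Util}_B}=\frac{q}{(1-q)\delta}\approx\frac{\beta}{2}\cdot\frac{1+e^{-\beta}}{1-e^{-\beta}},
\]
in the limit $\delta\to0$ with $q$ tuned to the threshold. This ratio has exactly the target form.

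There remains a gap: PCLC only guarantees $\Pr(B)>1/2$, and mixing could dilute the ratio. I would amplify it with $m$ candidates. Use $m-1$ near-copies $B_1,\dots,B_{m-1}$ of $B$, each of which is pairwise preferred to $A$, while the structure among the $B_i$ is kept symmetric. Then $A$ is a Condorcet loser and wins with probability at most $1/m$, so the expected welfare is at most
\[
\frac1m\,\textsc{Util}_A+\Bigl(1-\frac1m\Bigr)\textsc{Util}_B .
\]
The main obstacle is this last step: making the copies all small-utility while each individually beats $A$, and checking that the Plackett--Luce pairwise marginals stay of the form $\sigma_\beta(\text{difference})$. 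That form makes the per-pair analysis identical to the two-candidate case. Letting $m\to\infty$ and $\delta\to0$ then yields the bound. The SLLN from \S\ref{sec:largeElec} converts the population inequalities into almost-sure empirical ones as $n\to\infty$.
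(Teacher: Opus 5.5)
The paper gives no proof of this lemma; it imports it from \citet{golz2025distortion}, so there is no in-paper argument to compare against. Judged on its own, your clone construction is a correct proof. The step you flag as the main obstacle is in fact immediate.

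Take $B_1,\dots,B_{m-1}$ to be exact clones of $B$: every voter gives each $B_i$ the utility it gave $B$. Equation~\eqref{eq:plackett-luce} holds for every $m$, because under Plackett--Luce the relative order of two candidates depends only on their two strengths. Hence $p_{A,B_i}=q\,\sigma_\beta(1)+(1-q)\,\sigma_\beta(-\delta)<\tfrac12$ for every $i$, exactly as in your two-candidate computation. The clones tie with one another at $\tfrac12$, which is irrelevant to $A$ being a Condorcet loser. Since these margins are strictly below $\tfrac12$, concentration makes $A$ the Condorcet loser of the realized profile with probability tending to one.

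Write $R=\frac{q}{(1-q)\delta}$. PCLC gives $\Pr(f=A)\le \frac1m+o(1)$. Together with bounded convergence of $\U_j/n$, this yields $\D(f)\ge \frac{mR}{R+m-1}$. At the threshold one has exactly $R=\frac{\tanh(\beta\delta/2)}{\delta\tanh(\beta/2)}$, which increases to $\frac{\beta}{2}\frac{1+e^{-\beta}}{1-e^{-\beta}}$ as $\delta\to 0$. Taking the supremum over $q$ below the threshold, over $\delta>0$, and over $m$ gives the lemma.

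The amplification over $m$ is not cosmetic; it is essential. With $m=2$ your argument only yields $\frac{2R}{R+1}<2$, and no argument could do better at fixed $m$. Consider the rule that outputs a uniformly random candidate, except that it lowers a Condorcet loser's probability to $\frac1m-\epsilon$ and redistributes the excess. This rule satisfies PCLC and gives every candidate probability at least $\frac1m-\epsilon$. Its distortion is therefore at most $\frac{1}{1/m-\epsilon}$. So the lemma fails on $m$-candidate instances whenever $m<\frac{\beta}{2}\frac{1+e^{-\beta}}{1-e^{-\beta}}$. The statement must be read as a supremum over instances with a growing number of candidates, which is exactly how your argument uses it.
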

In particular, \textsc{MaximalLotteries} matches this bound. Getting a higher lower bound for all deterministic rules is an interesting open problem.

\section{\R}
We begin by analyzing rules that do not fully exploit ranking information.
The \R rule (Definition~\ref{def:rd}) selects a candidate with probability proportional to their top-choice frequency. In the classical utilitarian distortion model, \R achieves $\Theta(m^{\frac{3}{2}})$ distortion \cite{gkatzelis2023best}. Under the Plackett-Luce model, we show that while the distortion is capped at $me^\beta$, there exist instances where it scales linearly with $m$.

\begin{theorem} \label{thm:rd_ub}
The utilitarian distortion of \R under the PL model is at most $m e^\beta$.
\end{theorem}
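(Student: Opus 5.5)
The plan is to compute $\textsc{Util}(\Rr)$ directly in the large-election limit and lower bound it against $\textsc{Util}_B$, where $B$ is the optimal candidate.

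\textbf{Step 1: reduce to population quantities.} \R outputs $j$ with probability $t_j$, so the expected welfare is $\sum_j t_j \frac{\U_j}{n}$. By the SLLN discussion of \S\ref{sec:largeElec}, $t_j \to p^{\textsc{top}}_j$ and $\frac{\U_j}{n}\to\textsc{Util}_j$ almost surely. All these quantities lie in $[0,1]$, so bounded convergence gives
\[
\textsc{Util}(\Rr) \to \sum_{j} p^{\textsc{top}}_j\, \textsc{Util}_j \;\ge\; p^{\textsc{top}}_B\, \textsc{Util}_B .
\]
Here we simply drop every term except $j=B$, which is valid because all terms are nonnegative.

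\textbf{Step 2: lower bound the top-choice probability of $B$.} For any voter with utilities in $[0,1]$, the numerator satisfies $e^{\beta u_{\cdot,B}} \ge 1$. Each of the $m$ terms in the denominator is at most $e^{\beta}$. Hence
\[
\frac{e^{\beta u_{\cdot,B}}}{\sum_k e^{\beta u_{\cdot,k}}} \ge \frac{1}{m e^\beta}.
\]
Taking expectations over $\mathcal{V}$ gives $p^{\textsc{top}}_B \ge \frac{1}{m e^\beta}$.

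\textbf{Step 3: conclude.} Combining the two steps gives $\textsc{Util}(\Rr) \ge \textsc{Util}_B/(m e^\beta)$, so the ratio is at most $m e^\beta$ for every $\mathcal{V}$, and the supremum over $\mathcal{V}$ is at most $m e^\beta$.

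\textbf{Main obstacle: the degenerate case.} The only delicate point is the exchange of limit and expectation, together with the case $\textsc{Util}_B = 0$. If $\textsc{Util}_B = 0$, all utilities are zero almost surely, and I would treat the ratio as $1$ by convention. Otherwise the denominator is bounded away from zero, so the limsup of the ratio equals the ratio of the limits. Apart from this, the argument is a one-line bound and involves no real difficulty.
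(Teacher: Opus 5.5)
Your proposal is correct and follows the paper's proof essentially verbatim. Like the paper, you keep only the $j=B$ term of $\sum_j p^{\textsc{top}}_j \textsc{Util}_j$ and bound $p^{\textsc{top}}_B \ge 1/(me^\beta)$ pointwise from $e^{\beta u_{\cdot,B}} \ge 1$ and a denominator of at most $me^\beta$; your handling of the limit exchange and of the case $\textsc{Util}_B = 0$ just spells out details the paper leaves implicit.
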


\begin{proof}
The probability $p_j$ that candidate $j$ is selected by \R is given by the population top-choice probability: $p_j = \mathbb{E}_{\mathbf{u} \sim \mathcal{V}} \left[ \frac{e^{\beta u_j}}{\sum_{k \in C} e^{\beta u_k}} \right]$. The expected social utility of the rule is $\textsc{Util}(\R) = \sum_{j \in C} p_j \textsc{Util}_j$.
Since all terms are non-negative, we lower bound the expected utility by the contribution of the optimal candidate $B$:
\begin{equation}
    \sum_{j \in C} p_j \textsc{Util}_j \geq p_B \textsc{Util}_B.
\end{equation}
Thus, the distortion is bounded by $\D(\R) \leq \frac{\textsc{Util}_B}{p_B \textsc{Util}_B} = \frac{1}{p_B}$. Using the bounded-range utility assumption $u_{i,j} \in [0,1]$, we have $e^{\beta u_B} \geq e^0 = 1$ and $\sum_{k \in C} e^{\beta u_k} \leq m e^\beta$. This yields:
\begin{equation}
    p_B = \mathbb{E} \left[ \frac{e^{\beta u_B}}{\sum_{k \in C} e^{\beta u_k}} \right] \geq \frac{1}{m e^\beta}.
\end{equation}
It follows that $\D(\R) \leq m e^\beta$.
\end{proof}

\begin{theorem} \label{thm:rd_lb}
The utilitarian distortion of \R under the Plackett-Luce model is at least $(1-\varepsilon)m$ for any constant $\varepsilon \in (0, \frac{m-2}{m-1})$, as $\beta \to \infty$.
\end{theorem}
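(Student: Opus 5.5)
The plan is to build an explicit family of voter distributions on which \R almost never picks the welfare-optimal candidate $B$, while every other candidate has small welfare. We then let $\beta \to \infty$ so that the Plackett-Luce noise vanishes. By Definition~\ref{def:distortion} and the large-election limit of \S\ref{sec:largeElec}, it suffices to work with population quantities. The selection probabilities are $p_j = p_j^{\textsc{top}}$, and $\textsc{Util}(\R) = \sum_j p_j \textsc{Util}_j$.

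\textbf{Construction.} Let the candidates be $B, A_1, \dots, A_{m-1}$ and fix a small $\delta \in (0,1)$. Voter type $k \in \{1,\dots,m-1\}$ has weight $w_k$ and utilities
\[
u_{A_k} = 1,\qquad u_B = 1-\delta,\qquad u_{A_\ell} = 0 \ (\ell \neq k).
\]
This gives $\textsc{Util}_B = 1-\delta$ and $\textsc{Util}_{A_k} = w_k$. For type $k$, equation~\eqref{eq:top} gives the top-choice probability of $B$ as
\[
\frac{e^{\beta(1-\delta)}}{e^{\beta} + e^{\beta(1-\delta)} + (m-2)} \le e^{-\beta\delta} \to 0 \quad (\beta\to\infty),
\]
while $A_k$ is ranked first with probability tending to $1$. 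Hence $p_B \to 0$ and $p_{A_k} \to w_k$. It follows that
\[
\textsc{Util}(\R) \to \sum_k w_k^2 + o(1)\cdot(1-\delta),
\]
so the ratio tends to $(1-\delta)/\sum_k w_k^2$.

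\textbf{Optimizing the weights.} To maximize this ratio we minimize $\sum_k w_k^2$ subject to $\sum_k w_k = 1$, which by Cauchy--Schwarz gives $w_k = 1/(m-1)$. The ratio then becomes $(1-\delta)(m-1) - o_\beta(1)$. Finally we choose $\delta$, and $\beta$ large enough depending on $\delta$ and $m$, so that this exceeds the target $(1-\varepsilon)m$.

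\textbf{Main obstacle.} The hard step is exactly this last comparison. We need
\[
(1-\delta)(m-1) \ge (1-\varepsilon)m,
\]
which rearranges to $\varepsilon m \ge 1 + \delta(m-1)$. This can be met by taking $\delta$ small only when $\varepsilon > 1/m$. The hypothesis $\varepsilon < \frac{m-2}{m-1}$ in the theorem is only an upper bound on $\varepsilon$ (it makes $(1-\varepsilon)m > \frac{m}{m-1}$, i.e.\ nontrivial). It does not supply this lower bound.

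I would first try to close the gap by strengthening the instance, so that the expected welfare of \R drops below the $1/(m-1)$ floor. One idea is to let $B$ receive positive utility from an extra, heavily weighted type whose favorite is some $A_k$, with $B$ a close second. Another is to make the $A_k$'s utility to their own supporters strictly less than $1$. In each variant I would recompute $\sum_j p_j\textsc{Util}_j$, keeping in mind the constraint $u\in[0,1]$.

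If no such strengthening beats the floor, the argument shows that any instance of this \emph{winner-takes-its-own-group} shape is capped at $m-1$. In that case the statement as written can be established exactly in the regime $\varepsilon > 1/m$, where the construction above with $\delta < (\varepsilon m - 1)/(m-1)$ suffices. So I expect this final inequality, not the limit computations, to be the crux.
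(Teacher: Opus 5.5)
Your instance and limit computation are the same as the paper's: uniform weights $\frac{1}{m-1}$, $u_B=1-\varepsilon$, and limit $(1-\varepsilon)(m-1)$. Your diagnosis of the last step is also correct. The paper ends with $\mathcal{D}\to(1-\varepsilon)(m-1)\le(1-\varepsilon)m$, which points the wrong way for a lower bound. So its argument, like yours, only yields $(1-\varepsilon')m$ with $\varepsilon'=\varepsilon+\frac{1-\varepsilon}{m}>\frac1m$. The hypothesis $\varepsilon<\frac{m-2}{m-1}$ there serves only to make $B$ optimal. Even so, your proposal has a genuine gap as written: for $\varepsilon\le 1/m$ it offers directions to try, not a construction.

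The missing idea is that the $\frac{1}{m-1}$ floor is an artifact of every voter giving both their favourite and $B$ utility close to $1$. Move $B$'s welfare onto a small dedicated group, and let the bulk of voters top-rank a bad candidate to which they give only a tiny utility. Concretely, take candidates $B,A_1,\dots,A_{m-1}$ and $\lambda\in(0,1)$:
\begin{itemize}
\item Type I (fraction $\lambda$) has $u_B=1$ and all other utilities $0$.
\item Type II (fraction $1-\lambda$) has $u_{A_1}=\lambda^2$ and all other utilities $0$.
\end{itemize}
Then $\textsc{Util}_B=\lambda>(1-\lambda)\lambda^2=\textsc{Util}_{A_1}$, and every other candidate has welfare $0$.

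Fix $\lambda$ and let $\beta\to\infty$. Type II voters put $B$ first with probability $\frac{1}{e^{\beta\lambda^2}+m-1}\to 0$, and type I voters put $B$ first with probability tending to $1$. Hence $p_B\to\lambda$ and $p_{A_1}\to 1-\lambda$. The welfare of \textsc{RandomDictator} therefore tends to $\lambda^2+(1-\lambda)^2\lambda^2<2\lambda^2$, and the welfare ratio exceeds $\frac{1}{2\lambda}$ in the limit. Taking $\lambda\le\frac{1}{2m}$ gives distortion above $m>(1-\varepsilon)m$ for all sufficiently large $\beta$, for every $\varepsilon\in(0,1)$.

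This is your ``supporters' utility below $1$'' variant. It works only once $B$'s utility is removed from those supporters; otherwise they would simply top-rank $B$. The construction also shows that for fixed $m$ the distortion of \textsc{RandomDictator} is unbounded as $\beta\to\infty$, which is consistent with the $me^\beta$ upper bound. So the statement is true, but it does not follow from the instance you and the paper use.
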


\begin{proof}
We construct an instance with $m$ candidates: $B$ and $\{1, 2, \ldots, m-1\}$. Let $\mathcal{V}$ be a distribution over $m-1$ voter types, each occurring with equal probability $\frac{1}{m-1}$. A voter of type $k \in \{1, \ldots, m-1\}$ has utilities:
\[
u_{i,B} = 1 - \varepsilon, \qquad u_{i,k} = 1, \qquad u_{i,j} = 0 \text{ for } j \notin \{B, k\}.
\]

The expected utilities are $\textsc{Util}_B = 1 - \varepsilon$ and $\textsc{Util}_k = \frac{1}{m-1}$ for each $k \in \{1, \ldots, m-1\}$. Since $1 - \varepsilon > \frac{1}{m-1}$ by assumption, candidate $B$ is socially optimal.

For a voter of type $k$, the Plackett-Luce normalizer is
\[
Z = e^{\beta \cdot 1} + e^{\beta(1-\varepsilon)} + (m-2) \cdot e^0 = e^\beta + e^{\beta(1-\varepsilon)} + m - 2.
\]
The probability that this voter selects $B$ as their top choice is
\[
q_B = \frac{e^{\beta(1-\varepsilon)}}{Z} = \frac{1}{e^{\beta\varepsilon} + 1 + (m-2)e^{-\beta(1-\varepsilon)}},
\]
which is the same for all voter types. The probability of selecting candidate $k$ is $\frac{e^\beta}{Z}$, and the probability of selecting any $j \notin \{B, k\}$ is $\frac{1}{Z}$.
Since \R selects a uniformly random voter and outputs their top choice, the expected utility is
\[
\textsc{Util}(\R) = q_B \cdot \textsc{Util}_B + (1 - q_B) \cdot \frac{1}{m-1} = q_B(1-\varepsilon) + \frac{1-q_B}{m-1}.
\]

The distortion is therefore
\[
\D(\R) = \frac{\textsc{Util}_B}{\textsc{Util}(\R)} = \frac{1 - \varepsilon}{q_B(1-\varepsilon) + (1-q_B)/(m-1)}.
\]

As $\beta \to \infty$, we have $e^{\beta\varepsilon} \to \infty$, so $q_B \to 0$. Therefore,
\[
\D(\R) \to \frac{1-\varepsilon}{1/(m-1)} = (1-\varepsilon)(m-1) \leq (1-\varepsilon)m. \qedhere
\] 
\end{proof}

\section{\Plu}
We now study the \Plu rule, which selects the candidate with the highest number of top votes as the winner. In the classical utilitarian distortion model with unit-sum utilities, \Plu is order-wise optimal among deterministic rules with distortion $\Theta(m^2)$~\citep{caragiannis2011voting}, despite discarding most of the information in the rankings. 
 When the randomness is relatively low $(e^{\beta} \geq m),$ the distortion is at least linear in $m.$  When randomness is high $(e^{\beta} \leq m),$ the distortion is at least exponential in $\beta$. As we will see later, this is much worse than \C and \Bb, thereby aligning with conventional intuitions based on axiomatic properties of voting rules.

\begin{theorem}\label{thm:plurality_lb}
For $m \geq 4$ and $\beta \geq \ln 4$ and $\epsilon > 0$, the utilitarian distortion of \Plu under the Plackett-Luce model is at least $\min\left(\frac{e^{\beta}}{2+\epsilon} -1, \frac{m}{2 + \epsilon}-1\right).$ 
\end{theorem}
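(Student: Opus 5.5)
The plan is a spoiler-effect construction. The socially optimal candidate $B$ is supported by a large bloc of voters, but that bloc has equally high utility for many other candidates. Its top-choice votes are therefore split under Plackett-Luce. A small bloc devoted to a bad candidate $A$ concentrates its top votes and wins \Plup.

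\textbf{Construction.} Take candidates $A$, $B$, $c_1,\dots,c_{m-2}$ and two voter types.
\begin{itemize}
\item A type-1 voter, with probability $f$, has $u_A=1$ and utility $0$ for everyone else.
\item A type-2 voter, with probability $1-f$, has $u_A=0$ and utility $1$ for $B$ and for every $c_j$.
\end{itemize}
The welfares are then $\textsc{Util}_A=f$ and $\textsc{Util}_B=\textsc{Util}_{c_j}=1-f$. Hence, as long as $f<1/2$, every candidate other than $A$ is optimal, and electing $A$ gives distortion $(1-f)/f$.

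\textbf{Top-choice probabilities.} By \eqref{eq:top}, the population top-choice probabilities are
\[
p_A^{\textsc{top}}=\frac{f e^\beta}{e^\beta+m-1}+\frac{1-f}{(m-1)e^\beta+1},\qquad
p_B^{\textsc{top}}=p_{c_j}^{\textsc{top}}=\frac{f}{e^\beta+m-1}+\frac{(1-f)e^\beta}{(m-1)e^\beta+1}.
\]
A sufficient condition for $p_A^{\textsc{top}}>p_B^{\textsc{top}}$ is
\[
\frac{f(e^\beta-1)}{e^\beta+m-1}>\frac{(1-f)(e^\beta-1)}{(m-1)e^\beta+1},
\quad\text{i.e.}\quad
\frac{f}{1-f}>\frac{e^\beta+m-1}{(m-1)e^\beta+1}.
\]
Since $e^\beta+m-1\le 2\max(e^\beta,m-1)$ and $(m-1)e^\beta+1>(m-1)e^\beta$, the right-hand side is less than $2/\min(e^\beta,m-1)$.

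\textbf{Choice of $f$.} Set
\[
\frac{f}{1-f}=\frac{2+\epsilon}{\min(e^\beta,m-1)}.
\]
This gives a strict population inequality $p_A^{\textsc{top}}>\max_{j\neq A}p_j^{\textsc{top}}$. By the SLLN (\S\ref{sec:largeElec}), $t_A$ exceeds all other $t_j$ almost surely as $n\to\infty$, so \Plu elects $A$ with probability tending to $1$. The distortion is therefore
\[
\frac{1-f}{f}=\frac{\min(e^\beta,m-1)}{2+\epsilon}\ \ge\ \min\left(\frac{e^\beta}{2+\epsilon},\frac{m}{2+\epsilon}-1\right),
\]
using $\frac{m-1}{2+\epsilon}\ge\frac{m}{2+\epsilon}-1$. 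This is at least the claimed bound.

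\textbf{Role of the hypotheses.} The conditions $m\ge4$ and $\beta\ge\ln4$ ensure $\min(e^\beta,m-1)\ge 3$. Then $f/(1-f)<1$ once $\epsilon<1$, so $f<1/2$ and $B$ is indeed optimal. For larger $\epsilon$ the bound is weaker, and one can simply rerun the construction with a smaller $\epsilon$.

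\textbf{Main obstacle.} The only delicate step is the sufficient-condition inequality: we must check that the cross terms (type-1 voters' stray top votes for $B$, and type-2 voters' stray votes for $A$) only help $A$ or are dominated. We also need the inequality to remain strict, so that the limit argument applies rather than producing a tie. The reduction above, comparing the two differences with the common factor $e^\beta-1>0$, handles this. The rest is routine algebra on the final constant.
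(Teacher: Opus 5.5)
Your proposal is correct and follows essentially the paper's proof. It uses the same two-type spoiler instance (a small bloc valuing only the bad candidate, the rest valuing every other candidate equally), shows the bad candidate's population top-choice share strictly exceeds everyone else's, and reads off distortion $(1-f)/f$.

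The only differences are cosmetic. The paper requires the bad candidate's share to exceed $1/m$, lower-bounds that share by the type-1 contribution alone, and splits into the cases $e^\beta \le m$ and $e^\beta > m$. You instead compare $p_A^{\textsc{top}} - p_B^{\textsc{top}}$ exactly via the common factor $e^\beta-1$ and bound the threshold by $2/\min(e^\beta,m-1)$. This avoids the case split, makes the $f<1/2$ optimality check explicit, and gives the slightly sharper bound $\min(e^\beta,m-1)/(2+\epsilon)$.
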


This lower bound arises from the \emph{spoiler effect}: many equally good candidates split votes, allowing a candidate with concentrated but low total support to win. This phenomenon also causes high metric distortion in both classical~\citep{anshelevich2021distortion} and probabilistic settings~\citep{goyal2025metric}.

\begin{proof}
Let $\mathcal{V}$ be a distribution over two voter types with fractions $\gamma$ and $1-\gamma$:
\begin{center}
\begin{tabular}{lcc}
& $u_W$ & $u_j$ for $j \neq W$ \\[2pt]
Type 1 (fraction $\gamma$): & $1$ & $0$ \\
Type 2 (fraction $1-\gamma$): & $0$ & $1$
\end{tabular}
\end{center}
The social utilities are $\textsc{Util}_W = \gamma$ and $\textsc{Util}_j = 1 - \gamma$ for all $j \neq W$. Thus any $j \neq W$ is optimal, and if $W$ wins, the distortion is $\frac{1-\gamma}{\gamma}$.

Since all candidates $j \neq W$ have identical utilities for all voters, they split votes evenly, each receiving an expected vote share of $\frac{1 - s_W}{m-1}$, where $s_W$ is the expected vote share of $W$. By standard concentration arguments, if $s_W > \frac{1}{m} + \epsilon$ for any constant $\epsilon > 0$, then $W$ receives strictly more votes than any other candidate with probability approaching $1$ as $n \to \infty$, so $W$ wins.

Type 1 voters rank $W$ first with probability $\Pr(\top_i = W) = \frac{e^\beta}{e^\beta + (m-1)}$. The expected vote share of $W$ is therefore at least $\gamma \cdot \frac{e^\beta}{e^\beta + (m-1)}$.

\textbf{Case 1: $e^\beta \leq m$.} Set $\gamma = \frac{2+\epsilon}{e^\beta}$ for small $\epsilon > 0$. Since $\frac{e^\beta}{e^\beta + (m-1)} > \frac{e^\beta}{2m}$, the expected vote share of $W$ is at least $\frac{2+\epsilon}{e^\beta} \cdot \frac{e^\beta}{2m} = \frac{2+\epsilon}{2m} > \frac{1}{m}$, so $W$ wins. The distortion is $\frac{1-\gamma}{\gamma} = \frac{e^\beta - (2+\epsilon)}{2+\epsilon}$.

\textbf{Case 2: $e^\beta > m$.} Set $\gamma = \frac{2+\epsilon}{m}$ for small $\epsilon > 0$. Since $\frac{e^\beta}{e^\beta + (m-1)} > \frac{1}{2}$, the expected vote share of $W$ is at least $\frac{2+\epsilon}{m} \cdot \frac{1}{2} = \frac{2+\epsilon}{2m} > \frac{1}{m}$, so $W$ wins. The distortion is $\frac{1-\gamma}{\gamma} = \frac{m - (2+\epsilon)}{2+\epsilon}$.

Combining these cases yields distortion at least $\min\left(\frac{e^\beta}{2+\epsilon} - 1, \frac{m}{2+\epsilon} - 1\right)$.
\end{proof}

However, \Plu has a bounded distortion. We have the following upper bound:

\begin{theorem}\label{thm:plurality_ub}
The utilitarian distortion of \Plu under the Plackett-Luce model is upper bounded by $\min\left(\frac{me^{\beta}}{\ln{(m-1)}+2} + 1, \frac{e^{2\beta}}{\beta}\right)$.
\end{theorem}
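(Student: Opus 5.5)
Let $W$ denote the Plurality winner in the large-election limit, so that $p_W^{\textsc{top}} \ge p_j^{\textsc{top}}$ for all $j$, and in particular $p_W^{\textsc{top}} \ge p_B^{\textsc{top}}$ and $p_W^{\textsc{top}} \ge 1/m$. The approach is to compare the two sides of this inequality with $\textsc{Util}_W$ and $\textsc{Util}_B$. We will find an upper bound $p_W^{\textsc{top}} \le g(\textsc{Util}_W)$ and a lower bound $p_B^{\textsc{top}} \ge h(\textsc{Util}_B)$. Both will come from convex or concave surrogate functions of the per-voter top probability, combined with Jensen's inequality. Chaining them gives $h(\textsc{Util}_B) \le g(\textsc{Util}_W)$, and we then read off the ratio. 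The two terms in the minimum come from two different choices of surrogate.

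\emph{The $e^{2\beta}/\beta$ bound.} Fix a voter. The top probability of $W$ satisfies
\[
\frac{e^{\beta u_W}}{\sum_k e^{\beta u_k}} \le \frac{e^{\beta u_W}}{e^{\beta u_W}+(m-1)},
\]
because every other strength is at least $1$. The right-hand side is increasing in $u_W$, and on $[0,1]$ it lies below the chord from $u_W=0$ to $u_W=1$ whenever it is convex. When it is not convex, we will bound it linearly using its derivative, which is at most $\beta/4$-type. For $B$ we use
\[
\frac{e^{\beta u_B}}{\sum_k e^{\beta u_k}} \ge \frac{e^{\beta u_B}}{m e^{\beta}},
\]
which is convex in $u_B$. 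Jensen then gives $p_B^{\textsc{top}} \ge e^{\beta \textsc{Util}_B}/(m e^\beta)$.

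These two bounds are mismatched in their $m$-dependence, so a cleaner route is needed. I would try bounding $p_W^{\textsc{top}}$ by $\frac{1}{m} + c\,\textsc{Util}_W$. This is delicate, however, because the baseline $1/m$ cancels against $p_B^{\textsc{top}}$'s lower bound only at the order of $e^{-\beta}$. I would therefore instead compare within each voter. The ratio of the top probabilities of $W$ and $B$ equals $e^{\beta(u_W-u_B)}$ exactly. So
\[
p_W^{\textsc{top}} = \E\bigl[q_B\, e^{\beta(u_W-u_B)}\bigr],
\]
where $q_B$ is the voter's top probability for $B$. Using $q_B \ge e^{\beta u_B}/(m e^\beta)$ and $q_W \le e^{\beta u_W}/m$ should reduce everything to the one-dimensional inequality
\[
\E\bigl[e^{\beta u_B}\bigr]\, e^{-\beta} \le \E\bigl[e^{\beta u_W}\bigr].
\]
The factor $\beta$ then enters through the estimates $e^{\beta x}-1 \le x(e^\beta-1)$ (convexity, used for $W$) and $\E[e^{\beta u_B}] \ge 1+\beta\,\textsc{Util}_B$ (used for $B$). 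Substituting gives
\[
1+\beta\,\textsc{Util}_B \le e^\beta\bigl(1+e^\beta\,\textsc{Util}_W\bigr).
\]
To make the additive constants disappear we need to keep the $1/m$ baselines aligned. I expect to handle this by subtracting the uniform part, i.e.\ by working with $q_j - \tfrac{1}{m}$-type excesses.

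\emph{The $m e^\beta/(\ln(m-1)+2)+1$ bound.} Use $p_W^{\textsc{top}} \ge 1/m$ together with $p_W^{\textsc{top}} \le \E\bigl[e^{\beta u_W}/(e^{\beta u_W}+m-1)\bigr]$. Take a convex surrogate of $x \mapsto e^{\beta x}/(e^{\beta x}+m-1)$ on $[0,1]$ that passes through the value $1/m$ at $x=0$ and whose slope brings in the factor $\ln(m-1)+2$. The natural candidate is the tangent line at the inflection point $x^*=\ln(m-1)/\beta$, since the sigmoid form lets us bound the function by its value and slope there. This yields roughly
\[
\frac{1}{m} \le \frac{1}{m} + \frac{\textsc{Util}_W\cdot(\ldots)}{\ldots},
\]
i.e.\ a lower bound on $\textsc{Util}_W$ of order $(\ln(m-1)+2)/(m e^\beta)$ relative to $\textsc{Util}_B \le 1$. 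The additive $+1$ absorbs the baseline term.

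The main obstacle is the $1/m$ baseline: the top-choice probability is $1/m$ even at zero utility, so naive Jensen bounds produce additive slack that does not scale with $\textsc{Util}_W$. Choosing surrogates that are tight at $u=0$ is the critical step, and I would first try the tangent-at-inflection construction to handle it.
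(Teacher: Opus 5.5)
There is a genuine gap at the core of your plan. You propose to bound $p_W^{\textsc{top}}$ above by a function of $\textsc{Util}_W$ and $p_B^{\textsc{top}}$ below by a function of $\textsc{Util}_B$, separately; no such pair of bounds can yield a ratio bound. The all-zero profile has $\textsc{Util}_W=0$ and $p_W^{\textsc{top}}=1/m$, so your upper bound must be at least $1/m$ at $\textsc{Util}_W=0$. But $u_B\equiv x<1$, with every other candidate at utility $1$, gives $p_B^{\textsc{top}}=e^{\beta x}/(e^{\beta x}+(m-1)e^{\beta})<1/m$. So your lower bound is below $1/m$ for every $x<1$, and chaining never forces $\textsc{Util}_W>0$.

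Your reduced inequality $\E[e^{\beta u_B}]\,e^{-\beta}\le\E[e^{\beta u_W}]$ shows exactly this: every instance with $u_W\equiv 0$ satisfies it, whatever $\textsc{Util}_B$ is. That is why you are left with additive slack in $1+\beta\,\textsc{Util}_B\le e^{\beta}(1+e^{\beta}\,\textsc{Util}_W)$. Working with excesses $q_j-1/m$ does not rescue this, since in the same example $q_B-1/m<0$ even though $u_B$ is close to $1$.

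Your second part is vacuous for the same reason. The map $x\mapsto e^{\beta x}/(e^{\beta x}+m-1)$ is at least $1/m$ on $[0,1]$, so $1/m\le\E\bigl[e^{\beta u_W}/(e^{\beta u_W}+m-1)\bigr]$ holds for every distribution. $\textsc{Util}_B$ never enters, and no lower bound on $\textsc{Util}_W$ follows. Moreover, this function is convex to the left of its inflection point $\ln(m-1)/\beta$, so the tangent there lies below it on that side and cannot serve as an upper surrogate.

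The missing idea is to keep $W$ and $B$ over the same per-voter denominator $Z=\sum_k e^{\beta u_k}$. Start from $\E\bigl[(e^{\beta u_W}-e^{\beta u_B})/Z\bigr]\ge 0$ and split on the sign of $u_W-u_B$.
\begin{itemize}
\item On $\{u_W\ge u_B\}$, use $e^{\beta u_B}\ge 1$ and $Z\ge e^{\beta u_W}+m-1$. The integrand is then at most $g(u_W)$, where $g(w)=(e^{\beta w}-1)/(e^{\beta w}+m-1)$. Since $g(0)=0$, this is where the $1/m$ baseline cancels voter by voter.
\item On $\{u_W<u_B\}$, use $Z\le me^{\beta}$ and $e^{\beta u_B}-e^{\beta u_W}\ge\beta(u_B-u_W)$, together with $\E\bigl[(u_B-u_W)\mathbbm{1}\{u_W<u_B\}\bigr]\ge\textsc{Util}_B-\textsc{Util}_W$.
\end{itemize}
Together these give $\E[g(u_W)]\ge\beta(\textsc{Util}_B-\textsc{Util}_W)/(me^{\beta})$. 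The two terms of the minimum then come from two majorants of $g$ that are linear through the origin.
\begin{itemize}
\item The chord $g(w)\le g(1)\,w$ holds when $e^{\beta}\le m-1$, where $g$ is convex on $[0,1]$; it gives $e^{2\beta}/\beta$.
\item The bound $g(w)\le\beta w/(\ln(m-1)+2)$ holds for every $\beta$ and gives $me^{\beta}/(\ln(m-1)+2)+1$. It follows from $h(x)=x(e^x+m-1)-(e^x-1)(\ln(m-1)+2)\ge 0$, because $h(0)=0$ and $h'$ attains its minimum value $0$ at $x=\ln(m-1)$.
\end{itemize}
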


\begin{proof}
We work in the large-$n$ limit. By standard concentration (\S\ref{sec:largeElec}), the empirical top-choice frequencies $t_j$ and normalized social utilities $\U_j/n$ converge almost surely to their population counterparts $p_j^{\textsc{top}}$ and $\textsc{Util}_j$ respectively. Consequently, any candidate $W$ selected by \Plu with non-vanishing probability must be a maximizer of $p_j^{\textsc{top}}$ over $j \in C$, and in particular must satisfy $p_W^{\textsc{top}} \geq p_B^{\textsc{top}}$, where $B$ is the socially optimal candidate. Since the expected utility of the rule is a convex combination over such maximizers, it suffices to show that $\textsc{Util}_B / \textsc{Util}_W$ is bounded by the claimed quantity for every such $W$.

For a generic draw $\mathbf{u} \sim \mathcal{V}$, write $w := u_W$, $b := u_B$. The constraint $p_W^{\textsc{top}} \geq p_B^{\textsc{top}}$ reads:
\begin{equation}\label{eq:plu_pop_constraint}
\E\left[\frac{e^{\beta w} - e^{\beta b}}{\sum_{j \in C} e^{\beta u_j}}\right] \geq 0.
\end{equation}
Define $\mathcal{S}^+ := \{w \geq b\}$ and $\mathcal{S}^- := \{w < b\}$. Rearranging:
\begin{equation}\label{eq:plu_partition}
\E\left[\frac{(e^{\beta w} - e^{\beta b})\,\mathbbm{1}_{\mathcal{S}^+}}{\sum_{j \in C} e^{\beta u_j}}\right] \geq \E\left[\frac{(e^{\beta b} - e^{\beta w})\,\mathbbm{1}_{\mathcal{S}^-}}{\sum_{j \in C} e^{\beta u_j}}\right].
\end{equation}
We relax this inequality by weakening the LHS and strengthening the RHS. On $\mathcal{S}^+$, we minimize the denominator: $\sum_j e^{\beta u_j} \geq e^{\beta w} + (m-1) \cdot e^0 = e^{\beta w} + m - 1$, and use $e^{\beta b} \geq 1$ (since $b \geq 0$) to replace the numerator $e^{\beta w} - e^{\beta b}$ by the larger $e^{\beta w} - 1$. On $\mathcal{S}^-$, we maximize the denominator: $\sum_j e^{\beta u_j} \leq m \cdot e^{\beta}$. This yields:
\begin{equation}\label{eq:plu_relaxed}
\E\left[\frac{(e^{\beta w} - 1)\,\mathbbm{1}_{\mathcal{S}^+}}{e^{\beta w} + m - 1}\right] \geq \E\left[\frac{(e^{\beta b} - e^{\beta w})\,\mathbbm{1}_{\mathcal{S}^-}}{me^{\beta}}\right].
\end{equation}

\paragraph{Lower bounding the RHS}
By the mean value theorem, $e^{\beta b} - e^{\beta w} = \beta e^{\beta c}(b - w)$ for some $c \in (w, b)$. Since $c \geq w \geq 0$, we have $e^{\beta c} \geq 1$, so $e^{\beta b} - e^{\beta w} \geq \beta(b - w)$ pointwise on $\mathcal{S}^-$. Applying this to the RHS of~\eqref{eq:plu_relaxed}:
\begin{equation}\label{eq:plu_rhs_mvt}
\E\left[\frac{(e^{\beta b} - e^{\beta w})\,\mathbbm{1}_{\mathcal{S}^-}}{me^{\beta}}\right] \geq \frac{\beta}{me^{\beta}}\,\E\bigl[(b - w)\,\mathbbm{1}_{\mathcal{S}^-}\bigr].
\end{equation}
To bound $\E[(b-w)\,\mathbbm{1}_{\mathcal{S}^-}]$, note that $\E[b\,\mathbbm{1}_{\mathcal{S}^-}] = \textsc{Util}_B - \E[b\,\mathbbm{1}_{\mathcal{S}^+}] \geq \textsc{Util}_B - \E[w\,\mathbbm{1}_{\mathcal{S}^+}]$, where the inequality uses $b \leq w$ on $\mathcal{S}^+$. Therefore:
\begin{equation}\label{eq:plu_util_diff}
\E[(b - w)\,\mathbbm{1}_{\mathcal{S}^-}] = \E[b\,\mathbbm{1}_{\mathcal{S}^-}] - \E[w\,\mathbbm{1}_{\mathcal{S}^-}] \geq \textsc{Util}_B - \E[w\,\mathbbm{1}_{\mathcal{S}^+}] - \E[w\,\mathbbm{1}_{\mathcal{S}^-}] = \textsc{Util}_B - \textsc{Util}_W.
\end{equation}
Combining~\eqref{eq:plu_relaxed},~\eqref{eq:plu_rhs_mvt}, and~\eqref{eq:plu_util_diff}:
\begin{equation}\label{eq:plu_core}
\E\left[\frac{(e^{\beta w} - 1)\,\mathbbm{1}_{\mathcal{S}^+}}{e^{\beta w} + m - 1}\right] \geq \frac{\beta(\textsc{Util}_B - \textsc{Util}_W)}{me^{\beta}}.
\end{equation}

\paragraph{Upper bounding the LHS}
Define $g(w) := \frac{e^{\beta w} - 1}{e^{\beta w} + m - 1}$ for $w \in [0,1]$. Note that $g(0) = 0$, $g$ is increasing, and the LHS of~\eqref{eq:plu_core} is at most $\E[g(w)]$. We consider two cases.

\emph{Case 1: $e^\beta \leq m - 1$.}
The second derivative of $g$ is:
\begin{equation}
g''(w) = \frac{\beta^2 m \, e^{\beta w}(m - 1 - e^{\beta w})}{(e^{\beta w} + m - 1)^3}.
\end{equation}
When $e^\beta \leq m - 1$, we have $e^{\beta w} \leq e^{\beta} \leq m - 1$ for all $w \in [0,1]$, so $g''(w) \geq 0$. Hence $g$ is convex on $[0,1]$. Since $g(0) = 0$, convexity implies $g(w) \leq w \cdot g(1)$ for all $w \in [0,1]$. Therefore:
\begin{equation}
\E[g(w)] \leq g(1) \cdot \E[w] = \frac{e^{\beta} - 1}{e^{\beta} + m - 1} \cdot \textsc{Util}_W.
\end{equation}
Substituting into~\eqref{eq:plu_core}:
\begin{equation}
\frac{e^{\beta} - 1}{e^{\beta} + m - 1} \cdot \textsc{Util}_W \geq \frac{\beta(\textsc{Util}_B - \textsc{Util}_W)}{me^{\beta}} \implies \textsc{Util}_B \leq \textsc{Util}_W \left(\frac{me^{\beta}(e^{\beta} - 1)}{\beta(e^{\beta} + m - 1)} + 1\right).
\end{equation}
Since $e^{\beta} + m - 1 \geq m$, the first term is at most $\frac{e^{\beta}(e^{\beta} - 1)}{\beta}$. Using $e^{\beta} - 1 \leq e^{\beta}$ and $\beta \leq e^{\beta}$ (for $\beta \geq 1$):
\begin{equation}
\frac{\textsc{Util}_B}{\textsc{Util}_W} \leq \frac{e^{2\beta}}{\beta}.
\end{equation}

\emph{Case 2: $e^\beta > m - 1$.}
We show that $g(w) \leq \frac{\beta w}{\ln(m-1) + 2}$ for all $w \in [0,1]$. Substituting $x = \beta w$, it suffices to prove:
\begin{equation}
\frac{e^x - 1}{e^x + m - 1} \leq \frac{x}{\ln(m-1) + 2} \quad \text{for all } x \geq 0.
\end{equation}
Define $h(x) = x(e^x + m - 1) - (e^x - 1)(\ln(m-1) + 2)$. We need $h(x) \geq 0$ for $x \geq 0$. Observe that $h(0) = 0$, and:
\begin{equation}
h'(x) = e^x + (m-1) + xe^x - e^x(\ln(m-1) + 2) = e^x(x - \ln(m-1) - 1) + (m-1).
\end{equation}
The second derivative is $h''(x) = e^x(x - \ln(m-1))$, which is negative for $x < \ln(m-1)$ and positive for $x > \ln(m-1)$. Thus $h'$ attains its minimum at $x = \ln(m-1)$:
\begin{equation}
h'(\ln(m-1)) = (m-1)(\ln(m-1) - \ln(m-1) - 1) + (m-1) = -(m-1) + (m-1) = 0.
\end{equation}
Since $h'(x) \geq 0$ for all $x \geq 0$ and $h(0) = 0$, we conclude $h(x) \geq 0$ for all $x \geq 0$.

Therefore $g(w) \leq \frac{\beta w}{\ln(m-1) + 2}$, and:
\begin{equation}
\E[g(w)] \leq \frac{\beta}{\ln(m-1) + 2} \cdot \E[w] = \frac{\beta \cdot \textsc{Util}_W}{\ln(m-1) + 2}.
\end{equation}
Substituting into~\eqref{eq:plu_core}, since the LHS of~\eqref{eq:plu_core} is at most $\E[g(w)]$:
\begin{equation}
\frac{\beta \cdot \textsc{Util}_W}{\ln(m-1) + 2} \geq \frac{\beta(\textsc{Util}_B - \textsc{Util}_W)}{me^{\beta}}.
\end{equation}
Dividing both sides by $\beta > 0$ and rearranging:
\begin{equation}
\frac{\textsc{Util}_B}{\textsc{Util}_W} \leq \frac{me^{\beta}}{\ln(m-1) + 2} + 1.
\end{equation}

Taking the minimum of both cases yields the theorem.
\end{proof}

\section{\PluV and \PPV}
In certain cases, \PluV (Definition~\ref{def:plurality-veto}) can be an antidote to the spoiler effect that affects \Plup. This reduces the metric distortion \cite{kizilkaya2022plurality}, and in fact, \PluV achieves the best possible metric distortion for deterministic voting rules. However, this is not enough under the Plackett-Luce model, since \PluV also discards most of the information from the rankings, and the noise in the votes can cause its metric distortion to grow with $m$ \cite{goyal2025metric}. For the utilitarian model also, we show that \PluV has distortion increasing with $m,$ albeit more slowly, with $\ln m.$ 
\begin{theorem}\label{thm:pluV_lb}
The utilitarian distortion of \PluV under the Plackett-Luce model is at least $\frac{\beta \ln m}{6}$ when $\beta \leq \frac{m}{3\ln m}$, $\beta \geq 2\ln \beta + 2 \ln \ln m$, and $m \geq 10$.
\end{theorem}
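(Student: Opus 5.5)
We would prove the bound by an explicit two-type construction in which \PluV elects a niche candidate $W$ over a broadly but mildly liked candidate $B$. The weakness we target is the one named in the technical overview: \PluV gives an extreme penalty for being ranked last. Under Plackett--Luce noise, a candidate whose utility is only slightly above the zero-utility crowd has very few first-place tokens, and it keeps collecting vetoes until the last stages of the veto phase.

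\textbf{Construction.} Take candidates $B$, $W$, and $m-2$ dummies $D_1,\dots,D_{m-2}$.
\begin{itemize}
\item Type~1 voters (fraction $\gamma$) have $u_W=1$ and utility $0$ for everything else.
\item Type~2 voters (fraction $1-\gamma$) have $u_B=a$ and utility $0$ for everything else, with $a\le 1$ to be tuned.
\end{itemize}
Then $\textsc{Util}_B=(1-\gamma)a$ and $\textsc{Util}_W=\gamma$, so the distortion is $(1-\gamma)a/\gamma$ whenever $W$ wins. We aim for $a=1$ (or a constant) and $\gamma\approx 6/(\beta\ln m)$. The hypothesis $\beta\ge 2\ln\beta+2\ln\ln m$ is equivalent to $e^{\beta/2}\ge \beta\ln m$. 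We expect it to guarantee that $\gamma e^{\beta}/(e^{\beta}+m)$, the token share of $W$, still dominates the relevant quantities. The hypothesis $\beta\le m/(3\ln m)$ should keep $B$'s token share $(1-\gamma)e^{\beta a}/(e^{\beta a}+m-1)$, and its protection from vetoes, appropriately small.

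\textbf{Large-$n$ (fluid) analysis of the veto phase.}
\begin{enumerate}
\item By the SLLN, token shares converge to $p^{\textsc{top}}_j$. The veto phase can then be described by a deterministic fluid process. While a set $S$ of candidates still holds tokens, each veto hits $j\in S$ with probability equal to the population probability that $j$ is last among $S$. Under Plackett--Luce, this is the probability that $j$ is ranked last in the restricted PL model on $S$.
\item The dummies are symmetric, so they are eliminated together in a roughly uniform way. We will track how many tokens $B$ and $W$ have lost by the time only $k$ candidates remain.
\item For $B$: type~2 voters rank $B$ last among $S$ with probability about $e^{-\beta a}/|S|$ relative to a dummy, and type~1 voters do so with probability about $1/|S|$. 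So $B$ absorbs roughly a $1/|S|$ share of the vetoes at every stage. Summing over stages, $B$ absorbs a total veto mass of order $\sum_k 1/k\approx \ln m$ times the per-stage rate. This harmonic sum is where the $\ln m$ factor comes from.
\item For $W$: its token share is boosted by $e^{\beta}$ among type~1 voters, and it is vetoed at comparable per-stage rates. Because $W$ starts with a much larger token count, it outlasts $B$ provided $\gamma\, e^{\beta}/(e^{\beta}+m)$ exceeds $B$'s tokens plus its veto-survival margin. Balancing these gives the $\beta$ factor: the welfare gap $a/\gamma$ must be of order $\beta\ln m$ before $B$'s weak top-probability advantage lets it survive. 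The constant $6$ absorbs the slack in the sigmoid and harmonic-sum estimates.
\end{enumerate}

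\textbf{Main obstacle.} The hardest step is making the fluid analysis rigorous and quantitative. The per-stage last-place probabilities depend on the surviving set, and the order of elimination is not fixed in advance; in particular we must rule out $B$ being protected because the dummies exhaust the vetoes first. We would first try a simple comparison argument: show that $B$'s tokens are exhausted before the number of surviving candidates drops below some threshold $k_0$, while $W$ retains a constant fraction of its tokens at that point. We would lower-bound $B$'s veto rate by $(1-\gamma)e^{-\beta a}/|S|$ and upper-bound $W$'s veto rate by $1/|S|$. A concentration/union bound over the $n-1$ sequential vetoes would then transfer the fluid conclusion to the finite-$n$ process with probability tending to $1$, which gives distortion at least $\beta\ln m/6$ under the stated constraints on $m$ and $\beta$.
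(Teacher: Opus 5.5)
There is a genuine gap: in your construction the welfare-optimal candidate $B$ wins, so the fluid analysis has nothing to establish.

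\textbf{Why the construction fails.} With $a=1$, a Type~2 voter ranks $B$ first with probability $e^{\beta}/(e^{\beta}+m-1)$. So $B$ starts with about $(1-\gamma)e^{\beta}/(e^{\beta}+m-1)$ of the tokens, roughly $1/\gamma$ times $W$'s share of $\gamma e^{\beta}/(e^{\beta}+m-1)$.

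The veto rates also run opposite to your steps~3--4:
\begin{itemize}
\item Only the $\gamma$-fraction of Type~1 voters place $B$ near the bottom, so $B$ is vetoed at rate about $\gamma/(|S|-1)$. Your step~3 drops the factor $\gamma$.
\item $W$ has utility $0$ for the Type~2 majority. It is vetoed at rate about $(1-\gamma)/(|S|-1)$ while dummies survive, and at rate about $1-\gamma$ once only $B$ and $W$ remain.
\end{itemize}

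For instance, take $m=400$, $\beta=10$ (inside the allowed range) and $\gamma\approx 0.1$. Then $B$ holds about $0.88n$ tokens and $W$ about $0.1n$, while the dummies hold about $0.02n$ and are exhausted almost at once. After that roughly $90\%$ of the remaining voters veto $W$. So $W$ runs out within about $0.11n$ further vetoes, while $B$ keeps almost all of its tokens.

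Tuning $a$ down does not rescue this family. $W$ can outlast $B$ only if $B$'s stock $\approx e^{\beta a}/(e^{\beta a}+m)$ is small. A rough fluid computation then caps $(1-\gamma)a/\gamma$ at order $\min(m,e^{\beta})/\beta$. That is far below $\beta\ln m/6$ when $\beta$ is near the allowed maximum $m/(3\ln m)$.

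The structural flaw is that one group does three things at once. It gives $B$ its welfare, it gives $B$ its first-place tokens, and it aims its vetoes at $W$.

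\textbf{The missing idea.} You need to decouple $B$'s welfare from $B$'s first places, and to generate vetoes against $B$ that do not depend on the elimination order. The paper does this with two equal-sized voter types.
\begin{itemize}
\item Type~1 voters give $B$ utility $0$ and every other candidate the tiny utility $\frac{2}{\beta\ln m}$.
\item Type~2 voters give $B$ utility $\frac12$. Each also has an evenly spread subset of $\lfloor\frac{m-1}{\beta\ln m}\rfloor$ other candidates at utility $1$.
\end{itemize}
Then $\textsc{Util}_B=\frac14$, while every other candidate has welfare at most $\frac{3}{2\beta\ln m}$, giving the ratio $\frac{\beta\ln m}{6}$.

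The hypotheses $\beta\le m/(3\ln m)$ and $e^{\beta/2}\ge\beta\ln m$ serve only to bound $B$'s expected first places by $\frac{5n}{4m}$. Type~2 voters almost always put one of their utility-$1$ candidates first.

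The harmonic sum enters through the probability that a Type~1 voter ranks $B$ last in the \emph{full} ranking. That probability is $\prod_{k=1}^{m-1}\bigl(1+e^{-2/\ln m}/k\bigr)^{-1}\ge (em)^{-e^{-2/\ln m}}$, which exceeds $\frac{5}{2m}$ for $m\ge 10$.

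A voter who ranks $B$ last overall vetoes $B$ whenever $B$ still holds a token. So $B$ receives more vetoes than it has tokens regardless of the order of eliminations. No fluid analysis of the veto dynamics, and no concentration argument over the veto sequence, is needed.
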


\begin{corollary}\label{corr:ppv}
The utilitarian distortion of \PPV under the Plackett-Luce model is at least $\frac{\beta \ln m}{6}$ under the same conditions.
\end{corollary}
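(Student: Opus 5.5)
The plan is to reduce the corollary to Theorem~\ref{thm:pluV_lb} by reusing its lower-bound instance verbatim, and to show that the pruning step of \PPV either removes nothing or removes only the socially optimal candidate $B$. In both cases the welfare of the winner is no better than in the \PluV analysis. I expect the instance from Theorem~\ref{thm:pluV_lb} to have the same shape as the other constructions in the paper. That is, a single good candidate $B$ against $m-1$ interchangeable ``bad'' candidates that are symmetric across voter types, with the distortion bound coming from the fact that every non-$B$ candidate has welfare at most $\textsc{Util}_B \cdot \frac{6}{\beta \ln m}$. I will first re-read that construction and isolate exactly this uniform statement: that the bound holds for \emph{every} $j \neq B$, not just for the particular \PluV winner.

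Step 1 is to compute the pruning threshold in the large-$n$ limit. By the SLLN (\S\ref{sec:largeElec}), $t_j \to p_j^{\textsc{top}}$. The threshold $\frac{\alpha}{(6+\alpha)m}$ is strictly less than $\frac{1}{m}$ for every constant $\alpha>0$. By the symmetry of the instance, all non-$B$ candidates share a common value of $p_j^{\textsc{top}}$, namely $\frac{1-p_B^{\textsc{top}}}{m-1}$. It therefore suffices to show $p_B^{\textsc{top}} \leq 1 - \frac{(m-1)\alpha}{(6+\alpha)m}$, for instance $p_B^{\textsc{top}} \le \frac{6}{6+\alpha}$. I will check this using Eq.~\eqref{eq:top} together with the hypotheses $\beta \le \frac{m}{3\ln m}$ and $m\ge 10$. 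The point of these hypotheses is that they keep the normalizer dominated by the $m-1$ other candidates, so that $B$'s top share is small. With this, every non-$B$ candidate survives pruning almost surely, with the strict inequality inherited from the gap between the limit value and the threshold.

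Step 2 splits into two cases. If $B$ also survives, the restricted candidate set equals $C$. Then \PPV coincides with \PluV on this profile, and Theorem~\ref{thm:pluV_lb} applies directly. If $B$ is pruned, the winner is necessarily some $j\neq B$. By the uniform welfare bound isolated above, the ratio $\textsc{Util}_B/\textsc{Util}_j$ is at least $\frac{\beta\ln m}{6}$ in that case too. Since the limit statement is almost sure, mixing the two cases (should $B$ sit exactly at the threshold) does not affect the bound.

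The main obstacle is Step 1's reliance on the internal structure of the \PluV construction. If that instance is not symmetric among the non-$B$ candidates, I would instead lower-bound each $p_j^{\textsc{top}}$ directly. I would do this by exhibiting a voter type, of constant mass, for which $j$ has near-maximal utility, and then verifying that the resulting share beats $\frac{\alpha}{(6+\alpha)m}$. Failing that, I would modify the instance by adding a small mass of voters who rank each bad candidate first. The goal is to lift all top shares above $\frac{1}{m}$ while perturbing welfares by only $o(1)$ relative amounts.
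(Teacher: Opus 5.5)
Your argument is correct, and it takes a different and somewhat more general route than the paper's.

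\textbf{The paper's route.} The paper shows that pruning removes \emph{nothing}. Type 1 voters alone give $B$ at least $\frac{n}{2em}$ expected top votes, and every $j\neq B$ gets at least $\frac{n-5n/(4m)}{m-1}\ge\frac{n}{2m}$. Both exceed $\frac{\alpha n}{(6+\alpha)m}$ when $\alpha\le 1.3$, so the pruned profile is the original one and Theorem~\ref{thm:pluV_lb} applies verbatim. This needs a \emph{lower} bound on $B$'s top share. It also adds an extra hypothesis, $\alpha\le 1.3$, that is not among the theorem's conditions.

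\textbf{Your route.} You only need the $m-1$ non-$B$ candidates to survive. That uses just the \emph{upper} bound $p_B^{\textsc{top}}\le\frac{5}{4m}$ already proved in the theorem. The non-$B$ candidates are exactly symmetric, because Type 2 voters range over all $\lfloor\frac{m-1}{\beta\ln m}\rfloor$-subsets, so your fallback plans are unnecessary. You handle the case where $B$ is pruned through the uniform bound $\textsc{Util}_j\le\frac{3}{2\beta\ln m}=\frac{6}{\beta\ln m}\,\textsc{Util}_B$ for every $j\neq B$, which the construction does provide. With $p_B^{\textsc{top}}\le\frac{5}{4m}$, your survival condition $\frac{1-p_B^{\textsc{top}}}{m-1}>\frac{\alpha}{(6+\alpha)m}$ holds for all $\alpha<24m-30$. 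This covers a much larger range of $\alpha$ than the paper's $\alpha\le 1.3$.

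\textbf{Caveat on the range of $\alpha$.} Do not claim \emph{every} constant $\alpha>0$ without further work. As $\alpha\to\infty$ the threshold tends to $\frac{1}{m}$, so you would need $p_B^{\textsc{top}}<\frac{1}{m}$, which the estimate $\frac{5}{4m}$ does not give. This case is not harmless: if all non-$B$ candidates fell below the threshold while $B$ survived, $B$ would win outright. Either state the range of $\alpha$ you actually obtain, or prove the sharper estimate $p_B^{\textsc{top}}<\frac{1}{m}$.
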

These conditions are satisfied for a broad range of $\beta$ for large $m$. For example, when $m=400,$ the theorem holds at least for all $\beta \in [7.77,22.22].$ The lower bound of $\beta$ for the theorem to hold can be relaxed to a simpler condition $\beta \geq 3(1+\ln \ln m).$

\begin{proof}[Proof sketch]
We construct a distribution $\mathcal{V}$ over two voter types, each with probability $\frac{1}{2}$.

\emph{Type 1 voters (probability $\frac{1}{2}$):} $u_{i,B} = 0$ and $u_{i,j} = \frac{2}{\beta \ln m}$ for all $j \neq B$. 

\emph{Type 2 voters (probability $\frac{1}{2}$):} $u_{i,B} = \frac{1}{2}$. Additionally, Type 2 voters are partitioned into $\binom{m-1}{\lfloor \frac{m-1}{\beta \ln m} \rfloor}$ subtypes, each with probability $\frac{1}{2}\frac{1}{\binom{m-1}{\lfloor \frac{m-1}{\beta \ln m} \rfloor}}$. Each subtype corresponds to a subset $S_i \subset C \setminus \{B\}$ of size $\lfloor \frac{m-1}{\beta \ln m} \rfloor$ such that $u_{i,j} = 1$ for $j \in S_i$ and $u_{i,j} = 0$ for $j \notin S_i \cup \{B\}$. Each candidate appears in an equal number of subsets, up to rounding. Therefore, for each candidate $j \neq B,$ $u_{i,j}=1$ for at most a $\frac{1}{\beta \ln m}$ fraction of Type 2 voters. 

Under this construction, $\textsc{Util}_B = \frac{1}{4}$, while by symmetry every other candidate has social utility at most $\frac{3}{2\beta \ln m}$. Thus $B$ is optimal, and if $B$ loses, the distortion is at least $\frac{\beta \ln m}{6}$.

We show $B$ loses under \PluV by comparing top and bottom votes. For top votes, Type 1 voters rank $B$ first with probability $\frac{1}{1 + (m-1)e^{2/\ln m}} \leq \frac{1}{m}$, since $e^{2/\ln m} > 1$. Type 2 voters rank $B$ first with probability $\frac{e^{\beta/2}}{e^{\beta/2} + \lfloor \frac{m-1}{\beta \ln m} \rfloor e^\beta + m - 1 - \lfloor \frac{m-1}{\beta \ln m} \rfloor}$, which is at most $\frac{3 e^{-\beta/2} \beta \ln m}{2m}$ using $\beta \leq \frac{m}{3 \ln m}$. Combined with the condition $\beta \geq 2\ln\beta + 2\ln\ln m$, the total expected top votes for $B$ is at most $\frac{5n}{4m}$.

For bottom votes, we consider only Type 1 voters. Under Plackett-Luce, $B$ is ranked last when it is sequentially \emph{not} selected at each position. Since Type 1 voters have $u_{i,B} = 0$ and $u_{i,j} = \frac{2}{\beta \ln m}$ for $j \neq B$, the probability $B$ is ranked last is $\prod_{k=1}^{m-1} \frac{(m-k)e^{2/\ln m}}{(m-k)e^{2/\ln m} + 1} \geq \frac{1}{(em)^{e^{-2/\ln m}}}$. Thus the expected bottom votes from Type 1 voters alone is at least $\frac{n}{2(em)^{e^{-2/\ln m}}}$.

For $m \geq 10$, one can verify that $\frac{1}{(em)^{e^{-2/\ln m}}} > \frac{5}{2m}$, so $B$ receives more bottom votes than top votes. In \PluVv, this ensures $B$'s tokens are removed before the final round, so $B$ loses.

For \PPVv, we verify that all candidates (including $B$) receive at least $\frac{n}{2em}$ top votes in expectation, which exceeds the pruning threshold $\frac{\alpha n}{(6+\alpha)m}$ for $\alpha \leq 1.3$. Thus no candidates are pruned, and the same analysis applies.
The full calculations appear in Appendix~\ref{app:pluv}.
\end{proof}

\section{\Cc}

So far, we have seen that rules based primarily on top-choice information (\Rr, \Plup, and \PluVv) have distortion growing with $m$ or exponentially in $\beta$. We now turn to tournament-based rules, which aggregate the full pairwise comparison information implicit in the rankings. These rules are particularly relevant for AI alignment, where preferences are often elicited through pairwise comparisons. We show that such rules achieve distortion independent of $m$.
We first establish two useful lemmas.

\begin{lemma}[Two-Candidate Upper Bound]\label{lem:2cand}
    If $p_{X,Z} \geq \frac{1}{2},$ we have $\frac{\U_Z}{\U_X} \leq \frac{\beta}{2} \frac{1+e^{-\beta}}{1-e^{-\beta}}.$
\end{lemma}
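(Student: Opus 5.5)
The plan is to reduce the statement to a one-dimensional inequality in the utility difference $d := u_X - u_Z \in [-1,1]$ and then use the concavity and convexity properties of $\s$ listed in \S\ref{sec:pl}. Work in the large-$n$ limit, so that $\U_Z/\U_X \to \textsc{Util}_Z/\textsc{Util}_X$ (\S\ref{sec:largeElec}), and it suffices to bound the population ratio. Write $d^+ = \max(d,0)$ and $d^- = \max(-d,0)$, and set $f(x) := \s(x) - \frac12$. The hypothesis $p_{X,Z}\ge\frac12$ then reads $\E[f(d)] \ge 0$.

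\emph{Step 1 (linear bounds on $f$).} The function $f$ is odd, $f(0)=0$, and $f$ is concave on $[0,1]$.
\begin{itemize}
\item Upper bound on the positive side: the tangent at $0$ dominates, so $f(x)\le f'(0)\,x = \frac{\beta}{4}x$ for $x\ge 0$.
\item Lower bound on the negative side: the chord from $0$ to $1$ lies below $f$, so $f(y)\ge y\,f(1)$ for $y\in[0,1]$. By oddness, $f(-y)\le -y\,f(1)$.
\end{itemize}
Combining these pointwise with the hypothesis gives
\[
0 \le \E[f(d)] \le \tfrac{\beta}{4}\,\E[d^+] - f(1)\,\E[d^-].
\]
Here $f(1) = \frac{1-e^{-\beta}}{2(1+e^{-\beta})}$. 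Hence $\E[d^-] \le c\,\E[d^+]$ with $c := \frac{\beta}{4f(1)} = \frac{\beta}{2}\frac{1+e^{-\beta}}{1-e^{-\beta}}$.

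\emph{Step 2 (from differences back to utilities).} Pointwise, $u_Z = u_X - d^+ + d^-$. Taking expectations,
\[
\textsc{Util}_Z = \textsc{Util}_X - \E[d^+] + \E[d^-] \le \textsc{Util}_X + (c-1)\,\E[d^+].
\]
Since $u_Z\ge 0$, we have $d^+ \le u_X$, so $\E[d^+]\le \textsc{Util}_X$. Provided $c\ge 1$, this yields $\textsc{Util}_Z \le c\,\textsc{Util}_X$, which is the claim.

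\emph{Step 3 (checking $c\ge1$).} The map $\beta\mapsto \frac{\beta}{2}\frac{1+e^{-\beta}}{1-e^{-\beta}} = \frac{\beta}{2}\coth(\beta/2)$ is increasing, and it is at least $1$ for all $\beta>0$, since $x\coth x\ge 1$. In particular this covers the regime $\beta\ge1$.

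The main obstacle is Step 1: choosing the right linearizations so that the constants match exactly. Using the tangent on the side where $X$ is preferred and the chord on the side where $Z$ is preferred is precisely what produces the ratio $\frac{\beta/4}{f(1)}$. Any cruder bound, such as a Lipschitz bound on both sides, would lose the $\frac{1+e^{-\beta}}{1-e^{-\beta}}$ factor. The rest is bookkeeping. The decomposition in Step 2 replaces the heuristic that the worst case sets the smaller utility to $0$, which is valid because a common shift of both utilities can only push the ratio toward $1$ when $c\ge1$.
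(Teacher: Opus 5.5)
Your proof is correct, but it takes a different route from the paper. The paper does not prove the inequality directly. It notes that with $m=2$ the point mass on $X$ is a maximal lottery whenever $p_{X,Z}\ge\frac12$, imports the \textsc{MaximalLotteries} distortion bound from \citet{golz2025distortion}, and then passes from $\textsc{Util}$ to $\U$ by the strong law.

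You prove the bound from scratch. The tangent bound $\s(x)-\frac12\le\frac{\beta}{4}x$ on the side where $X$ is preferred, together with the chord bound $\s(y)-\frac12\ge y\bigl(\s(1)-\frac12\bigr)$ transported by oddness to the side where $Z$ is preferred, gives $\E[d^-]\le c\,\E[d^+]$. The bookkeeping $u_Z=u_X-d^++d^-$, $d^+\le u_X$ and $c=\frac{\beta}{2}\coth(\beta/2)\ge 1$ then finishes the argument.

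The paper's route is a one-liner, but it leans on an external theorem. In the tie case $p_{X,Z}=\frac12$ every lottery is maximal, so it also implicitly needs the cited guarantee to cover the particular maximal lottery that puts all mass on $X$.

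Your route buys three things:
\begin{itemize}
\item It is self-contained.
\item It makes the constant transparent as the ratio of the tangent slope $\beta/4$ to the chord slope $\s(1)-\frac12$. These are the same two linearizations the paper uses in Lemma~\ref{lem:linearization} and in the \Cc/\Bb\ proofs.
\item It uses nothing about the voter distribution, so it holds verbatim for a finite electorate. From $\sum_{i\in V}\s(x_i-z_i)\ge\frac n2$ it gives $\U_Z\le c\,\U_X$ for every $n$, with no limit and no division by $\U_X$. This is exactly the form in which the lemma is invoked inside $\optimizer_{\textsc{Cop}}$, so your opening reduction to the large-$n$ limit is not actually needed.
\end{itemize}
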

\begin{proof}
    This follows from the fact that the distortion of \textsc{MaximalLotteries} is at most $\frac{\beta}{2} \frac{1+e^{-\beta}}{1-e^{-\beta}}$ \citep{golz2025distortion}. When $m=2$, $X$ is a \textsc{MaximalLotteries} winner if $p_{X,Z} \geq \frac{1}{2}.$ The proof follows from the fact that $\frac{\textsc{Util}_Z}{\textsc{Util}_X}=\frac{\U_Z}{\U_X}$ as $n\to \infty$.
\end{proof}

This lemma implies that for the case of two candidates, any voting rule that picks the candidate with more votes as the winner cannot have a distortion worse than $\frac{\beta}{2} \frac{1+e^{-\beta}}{1-e^{-\beta}}.$ We will use this result as a base case in our proofs for the upper bounds of \C and \Bb.

The following lemma is technical and is needed to circumvent the non-convexity of $\s$ in analysis. Its purpose is to lower-bound the gain in vote fraction against a fixed opponent with an increase in social utility. In our proofs, we will use it to lower-bound the vote fraction of the socially optimal candidate $B$ against an arbitrary opponent.
\begin{lemma}\label{lem:linearization}
   Let  $u_{i,X} = x_i$ and $u_{i,Z} = z_i$ be the utilities of voter $i$ from candidates $X$ and $Z$ respectively. We have the following inequality:
   \begin{equation}
       \sum_{i \in V}  \s(x_i - z_i)  - \s(- z_i) \geq \frac{(\U_X - \U_Z)}{2}\frac{1-e^{-\beta}}{1+e^{-\beta}}.
   \end{equation}
\end{lemma}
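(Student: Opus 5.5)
The plan is to prove the inequality voter by voter and then sum over $i\in V$. Write $c:=\frac{1-e^{-\beta}}{1+e^{-\beta}}$ and note that $\s(1)-\frac12=\frac{c}{2}$. Since $\U_X-\U_Z=\sum_i (x_i-z_i)$, it suffices to show the pointwise bound
\[
\s(x-z)-\s(-z)\;\geq\;\frac{c}{2}(x-z)\qquad\text{for all } x,z\in[0,1].
\]
Fix $z$ and substitute $t=x-z\in[-z,1-z]$. Define $h(t):=\s(t)-\s(-z)-\frac{c}{2}t$; we need $h\geq 0$ on this interval. Because $\s$ is neither convex nor concave, I would split at $t=0$, using the properties of $\s$ listed in \S\ref{sec:pl}.

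\emph{Case $t\in[-z,0]$.} Here $\s(t)\geq\s(-z)$ because $\s$ is increasing, and $-\frac{c}{2}t\geq 0$. So $h(t)\geq 0$ trivially, and the convex region never has to be analysed.

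\emph{Case $t\in[0,1-z]$.} Here $\s$ is concave, so $h$ is concave on this interval and attains its minimum at an endpoint.
\begin{itemize}
\item At $t=0$: $h(0)=\frac12-\s(-z)\geq 0$.
\item At $t=1-z$: we need $\s(1-z)-\s(-z)\geq\frac{c}{2}(1-z)$.
\end{itemize}

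The right endpoint is the step I expect to need the most care. The left side is the increment of $\s$ over a unit-length window $[a,a+1]$ with $a=-z\in[-1,0]$. Set $\phi(a):=\s(a+1)-\s(a)$, so $\phi'(a)=\s'(a+1)-\s'(a)$. The derivative $\s'$ is even and unimodal with its peak at $0$. For $a\in[-1,-\tfrac12]$ we have $|a+1|\leq|a|$, so $\phi'(a)\geq 0$; for $a\in[-\tfrac12,0]$ the inequality reverses. Hence $\phi$ is minimized at the endpoints $a\in\{-1,0\}$. There the oddness relation $\s(x)+\s(-x)=1$ gives $\phi(0)=\phi(-1)=\s(1)-\frac12=\frac{c}{2}$. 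Therefore
\[
\s(1-z)-\s(-z)\;\geq\;\frac{c}{2}\;\geq\;\frac{c}{2}(1-z),
\]
which settles the right endpoint.

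Combining the two cases gives $h\geq 0$ on all of $[-z,1-z]$. Summing the pointwise inequality over voters yields
\[
\sum_{i\in V}\s(x_i-z_i)-\s(-z_i)\;\geq\;\frac{c}{2}\sum_{i\in V}(x_i-z_i)\;=\;\frac{\U_X-\U_Z}{2}\cdot\frac{1-e^{-\beta}}{1+e^{-\beta}},
\]
which is the statement of Lemma~\ref{lem:linearization}.
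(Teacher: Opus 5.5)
Your proof is correct and is essentially the paper's argument: split on the sign of $x_i - z_i$, use monotonicity of $\sigma_\beta$ when $x_i < z_i$, and use concavity of $\sigma_\beta$ on $[0,\infty)$ when $x_i \ge z_i$. Your pointwise formulation is slightly cleaner than the paper's, which discards the voters with $x_i<z_i$, then uses $\sum_{i:x_i\ge z_i}(x_i-z_i)\ge \U_X-\U_Z$, and treats $\U_X<\U_Z$ as a separate case.

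The one place you diverge is the unimodality analysis of $\phi$ at the right endpoint, which is valid but unnecessary. Bounding $\sigma_\beta(-z)\le\frac12$ first reduces the whole case $t\ge 0$ to the chord inequality $\sigma_\beta(t)\ge\frac12+\frac{c}{2}t$ on $[0,1]$, and that is exactly what the paper does.
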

\begin{proof}
The result holds trivially when $U_X < U_Z$ since the left-hand side is non-negative for $x_i \geq 0$. 
When $U_X \geq U_Z,$ we have the following.
    \begin{align}
         \sum_{i \in V}  \s(x_i - z_i)  - \s(- z_i) &= \sum_{i:x_i \geq z_i}  \s(x_i - z_i)  - \s(- z_i) + \sum_{i : x_i < z_i}  \s(x_i - z_i)  - \s(- z_i) \\
         &\geq \sum_{i : x_i \geq z_i}  \s(x_i - z_i)  - \s(- z_i) \\
         &\geq \sum_{i : x_i \geq z_i}  \s(x_i - z_i)  - \s(0).
         \intertext{Let $t_i$ denote $x_i - z_i$. Since $\s(t_i)$ is concave increasing in $t_i \geq 0,$ we have $\s(t_i) \geq \s(0) + t_i (\s(1)-\s(0)).$ Therefore, }
        \sum_{i \in V} \s(x_i - z_i)  - \s(- z_i) &\geq \sum_{i : x_i \geq z_i}  (x_i - z_i)(\s(1)-\s(0))
         \intertext{We know that $\sum\limits_{i: x_i \geq z_i}  x_i - z_i \geq \sum\limits_{i \in V}  x_i - z_i  = \U_X - \U_Z$. Therefore,}
        \sum_{i \in V} \s(x_i - z_i)  - \s(- z_i) &\geq (\U_X - \U_Z)(\s(1) - \s(0)) = \frac{(\U_X - \U_Z)}{2}\frac{1-e^{-\beta}}{1+e^{-\beta}}.
    \end{align}

This completes the proof.
\end{proof}

We are now ready to discuss the bounds on the distortion of \Cc. We first give the upper bound, and then a lower bound which is arbitrarily close to the upper bound when $\beta \rightarrow \infty$. 

\begin{theorem} \label{thm:copeland_ub}
   The utilitarian distortion of \C under the PL model is at most $\beta \frac{1+e^{-\beta}}{1-e^{-\beta}}.$
\end{theorem}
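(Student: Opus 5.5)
Write $c := \sigma_\beta(1)-\sigma_\beta(0) = \frac{1}{2}\frac{1-e^{-\beta}}{1+e^{-\beta}}$. In this notation the target bound is $\frac{\beta}{2c}$ and Lemma~\ref{lem:2cand} gives $\frac{\beta}{4c}$. The plan is to use the classical fact that a Copeland winner lies in the uncovered set of the (population) tournament: it reaches the optimum $B$ in at most two steps. The two-step case is then handled by chaining Lemma~\ref{lem:linearization} with Lemma~\ref{lem:2cand}, and each link costs at most $\frac{\beta}{4c}$.

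\textbf{Step 0 (large-$n$ reduction).} As in the proof of Theorem~\ref{thm:plurality_ub}, $s_{j,j'}\to p_{j,j'}$ almost surely. So any $W$ selected with non-vanishing probability must be a Copeland winner of the population tournament, with ties at $p=\frac12$ resolved in the weak sense. It therefore suffices to show $\U_B \le \frac{\beta}{2c}\U_W$ for every such $W$, where "$X$ beats $Z$" means $p_{X,Z}\ge \frac12$.

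\textbf{Step 1 (uncovered set).} If $B$ beat $W$ and also beat every candidate that $W$ beats, then $B$'s Copeland score would strictly exceed $W$'s, a contradiction. Hence either (a) $W$ beats $B$, or (b) there is some $Y$ with $W$ beating $Y$ and $Y$ beating $B$. In case (a), Lemma~\ref{lem:2cand} gives $\U_B/\U_W\le \frac{\beta}{4c}\le\frac{\beta}{2c}$, and we are done.

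\textbf{Step 2 (case (b)).} Write $w_i,y_i,b_i$ for voter $i$'s utilities for $W$, $Y$, $B$, and work with the normalized sums, i.e.\ divide by $n$.
\begin{enumerate}
\item Since $Y$ beats $B$, we have $\sum_i \s(b_i-y_i)\le n/2$. Lemma~\ref{lem:linearization} with $X=B$, $Z=Y$ then gives
\[
c(\U_B-\U_Y)\le \sum_i\bigl(\tfrac12-\s(-y_i)\bigr)=\sum_i\bigl(\s(y_i)-\tfrac12\bigr).
\]
\item Bound the right-hand side by splitting it as
\[
\sum_i\bigl(\s(y_i)-\s(y_i-w_i)\bigr)+\sum_i\bigl(\s(y_i-w_i)-\tfrac12\bigr).
\]
The second sum is $\le 0$ because $W$ beats $Y$, i.e.\ $\sum_i\s(w_i-y_i)\ge n/2$, together with the symmetry $\s(x)+\s(-x)=1$. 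The first sum is at most $\frac{\beta}{4}\sum_i w_i=\frac{\beta}{4}\U_W$, because $\s'\le\beta/4$.
\item Combining, $\U_B\le \U_Y+\frac{\beta}{4c}\U_W$. Lemma~\ref{lem:2cand} applied to "$W$ beats $Y$" gives $\U_Y\le \frac{\beta}{4c}\U_W$. Hence $\U_B\le\frac{\beta}{2c}\U_W=\beta\frac{1+e^{-\beta}}{1-e^{-\beta}}\,\U_W$.
\end{enumerate}

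The main obstacle is Step 2.2. Lemma~\ref{lem:linearization} leaves the residual term $\sum_i(\s(y_i)-\frac12)$, which must be charged to $\U_W$ rather than to $\U_Y$; otherwise it could be as large as $\U_Y/\ldots$ and would spoil the bound. The trick is to insert $\s(y_i-w_i)$, so that the pairwise constraint between $W$ and $Y$ absorbs part of the residual and the Lipschitz constant $\beta/4$ controls the rest. A secondary point is the handling of exact ties $p=\frac12$ in Step 0. These are harmless because every inequality above uses only the weak relation $p\ge\frac12$.
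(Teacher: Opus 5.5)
Your proposal is correct and essentially reproduces the paper's proof. It uses the same reduction to either $p_{W,B}\ge\frac12$ or an intermediate $Y$, and in the second case the same pairing of Lemma~\ref{lem:linearization} on the $Y$-over-$B$ edge with the $\beta/4$ Lipschitz bound on the $W$-over-$Y$ edge to get $c(\U_B-\U_Y)\le\frac{\beta}{4}\U_W$. The only cosmetic difference is that you finish by adding Lemma~\ref{lem:2cand}'s bound $\U_Y\le\frac{\beta}{4c}\U_W$, whereas the paper case-splits on whether $\U_Y\ge\U_B/2$; both yield $\U_B\le\frac{\beta}{2c}\U_W$.
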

\begin{proof}
 Let $W$ be a candidate selected by \C with non-zero probability as $n \to \infty$. Since $W$ is a winner, their Copeland score must be at least as high as that of the socially optimal candidate $B$. This implies that at least one of the following conditions must hold:
\begin{enumerate}
    \item $p_{W,B} \geq \frac{1}{2}$.
    \item $\exists Y \in C\setminus\{W,B\}$ such that $p_{Y,B} \geq \frac{1}{2}$ and $p_{W,Y} \geq \frac{1}{2}.$
\end{enumerate}
%

For case 1, the upper bound follows from Lemma~\ref{lem:2cand}.
We now study case 2.

As shorthand, we use $w_i, y_i,$ and $b_i$ to denote $u_{i,W}, u_{i,Y},$ and $u_{i,B}$ respectively. Given the SLLN, the maximum possible distortion can be encoded as the optimal value of the following program.

\begin{equation}  \label{opt:copeland}
\optimizer_{\textsc{Cop}} =
\left\{
\begin{array}{ll}
\displaystyle
\text{maximize} &  \frac{\sum_{i \in V} b_i}{\sum_{i \in V} w_i}  \\[0.5em]
\text{subject to} &
\displaystyle \sum_{i \in V} \s(w_i - y_i) \geq \frac{n}{2}\\ 
& \displaystyle \sum_{i \in V} \s(y_i - b_i) \geq \frac{n}{2}\\
& w_i \in [0,1] ~~ \forall i \in V\\  
& y_i \in [0,1] ~~ \forall i \in V \\ 
& b_i \in [0,1] ~~ \forall i \in V\\
\end{array}
\right.
\end{equation}

Since the slope of $\s$ is upper bounded by $\frac{\beta}{4}$: 
\begin{equation}
    \sum\limits_{i \in V} \s(w_i - y_i) \leq \sum\limits_{i \in V} \frac{\beta}{4} w_i + \s(- y_i) = \frac{\beta \U_W}{4} + \sum\limits_{i \in V} \s(- y_i).
\end{equation}
The constraint in $\optimizer_{\textsc{Cop}}$ is $\sum\limits_{i \in V} \s(w_i - y_i) \geq \frac{n}{2}.$  Therefore, 
\begin{equation} \label{eq:sig-lb}
  \sum\limits_{i \in V} \sigma_{\beta}(- y_i) \geq  \frac{n}{2} - \frac{\beta \U_W}{4}.
\end{equation}

Lemma~\ref{lem:linearization} implies the following: 
\begin{align}
    &\sum_{i \in V}    \s(b_i - y_i)  - \s(- y_i) \geq  \frac{(\U_B - \U_Y)}{2}\frac{1-e^{-\beta}}{1+e^{-\beta}} \\
    \implies & n- \sum_{i \in V} \s(y_i - b_i) -  \sum_{i \in V} \s(-y_i) \geq    \frac{(\U_B - \U_Y)}{2}\frac{1-e^{-\beta}}{1+e^{-\beta}} \\
    \implies &   \sum_{i \in V} \s(-y_i) \leq    n - \sum_{i \in V} \s(y_i - b_i) -  \frac{(\U_B - \U_Y)}{2}\frac{1-e^{-\beta}}{1+e^{-\beta}} \\
    \intertext{The constraint in $\optimizer_{\textsc{Cop}}$ is $\sum\limits_{i \in V} \s(y_i - b_i) \geq \frac{n}{2}$. Therefore, }
    &   \sum_{i \in V} \s(-y_i) \leq    \frac{n}{2} - \frac{(\U_B - \U_Y)}{2}\frac{1-e^{-\beta}}{1+e^{-\beta}}.  \label{eq:sig-ub}
\end{align}

Equations~\eqref{eq:sig-lb} and~\eqref{eq:sig-ub} imply the following:

\begin{align}
    & \frac{n}{2} -  \frac{(\U_B - \U_Y)}{2}\frac{1-e^{-\beta}}{1+e^{-\beta}} \geq  \frac{n}{2} - \frac{\beta \U_W}{4}\\
    \implies &  \frac{\beta \U_W}{2} \geq (\U_B - \U_Y)\frac{1-e^{-\beta}}{1+e^{-\beta}}. \label{eq:cop_uw_core}
    \intertext{If  $\U_Y \geq \frac{\U_B}{2},$ the result follows from Lemma~\ref{lem:2cand} since $p_{W,Y} \geq \frac{1}{2}$. If  $\U_Y < \frac{\U_B}{2},$ we have,}
    &\frac{\beta \U_W}{2} \geq \frac{\U_B}{2} \frac{1-e^{-\beta}}{1+e^{-\beta}}.
\end{align}

Finally, if $\textsc{Util}_B=0$ then the distortion is $1$. Otherwise, by the strong law,
$\U_B/n\to \textsc{Util}_B>0$ a.s. and $\U_W/n\to \textsc{Util}_W$ a.s.
Moreover, either $\U_Y\ge \U_B/2$ (in which case Lemma~\ref{lem:2cand} with $p_{W,Y}\ge \tfrac12$
implies $\U_W \ge c_\beta \U_B$) or $\U_Y<\U_B/2$ (in which case \eqref{eq:cop_uw_core}
implies $\U_W \ge c_\beta \U_B$), where $c_\beta:=\frac{1}{\beta}\frac{1-e^{-\beta}}{1+e^{-\beta}}>0$.
Hence $\textsc{Util}_W \ge c_\beta \textsc{Util}_B>0$, so the welfare ratio is well-defined and
\[
\frac{\textsc{Util}_B}{\textsc{Util}_W}
=\lim_{n\to\infty}\frac{\U_B/n}{\U_W/n}
=\lim_{n\to\infty}\frac{\U_B}{\U_W}.
\]
Hence the bound on $\U_B/\U_W$ implies the same bound on $\textsc{Util}_B/\textsc{Util}_W$.
\end{proof}

We now give a nearly tight lower bound for \C with just three candidates.

\begin{theorem} \label{thm:copeland_lb}
The utilitarian distortion of \C under the PL model is at least $(1-\epsilon)\beta$ for any constant $\epsilon>0$, as $\beta \to \infty$.
\end{theorem}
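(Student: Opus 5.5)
The plan is to construct an explicit three-candidate instance $C=\{W,Y,B\}$ in which \C elects $W$ even though $\textsc{Util}_B/\textsc{Util}_W \ge (1-\epsilon)\beta$. The instance should make both inequalities in the proof of Theorem~\ref{thm:copeland_ub} nearly tight at the same time. That proof combines
\[
\U_B-\U_Y \le \tfrac{\beta}{2}\tfrac{1+e^{-\beta}}{1-e^{-\beta}}\,\U_W
\qquad\text{(from \eqref{eq:cop_uw_core})}
\]
with the two-candidate bound $\U_Y\le \tfrac{\beta}{2}\tfrac{1+e^{-\beta}}{1-e^{-\beta}}\U_W$ from Lemma~\ref{lem:2cand}. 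Adding them gives $\U_B\lesssim \beta\,\U_W$, which is attained only if $\U_Y\approx\U_B/2\approx\beta\U_W/2$.

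The Copeland structure I will aim for is a cycle: $p_{W,Y}>\tfrac12$, $p_{Y,B}>\tfrac12$ and $p_{B,W}\ge\tfrac12$. Then all three candidates have Copeland score $1$, and the fixed tie-breaking order $\triangleright$ can be chosen with $W$ first. If margins are needed, I can instead make every pairwise probability exceed $\tfrac12$ by a constant slack and let $n\to\infty$ via SLLN.

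I will build $\mathcal{V}$ from a few voter types, each with utilities in $\{0,s,1\}$ for a small $s$ of order $1/\beta$. A voter who gives two candidates equal utility contributes exactly $\tfrac12$ to that pairwise comparison, so the three comparisons can be tuned by separate voter groups. Utility $1$ carries a negligible $\sigma_\beta(-1)=O(e^{-\beta})$ cross-effect. Utility $s$ moves a comparison by about $\beta s/4$ per voter, which is the maximal slope of $\sigma_\beta$.

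The candidate types, with weights to be optimized, are the following.
\begin{itemize}
\item $(b,y,w)=(1,0,0)$: carries $B$'s welfare, and against $Y$ it loses only about $\tfrac12$ per unit of $b$, which is the cheapest possible rate.
\item $(b,y,w)=(1,1,0)$ and $(0,1,0)$: supply $\U_Y$ while keeping $\sigma_\beta(-y)$ near $0$.
\item $(0,0,s)$ and $(0,s,s)$: let $W$ (respectively $Y$) gain at rate $\beta/4$ per unit of utility.
\end{itemize}
I will then write the three pairwise constraints together with the welfare ratio as a small linear program in these weights, solve it, and check that the ratio tends to $\beta$ as $\beta\to\infty$ and $s\to0$. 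The condition $p_{B,W}\ge\tfrac12$ is automatic once the $B$-heavy voters dominate.

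The main obstacle is getting factor $\beta$ rather than $\beta/2$. My first attempts with these types give $\U_B/\U_W\approx\beta/2$. The reason is that the voters who let $Y$ beat $B$ (high $y$, low $b$) also hurt $W$ against $Y$ unless they give $W$ utility too, and $W$ must then buy back that deficit at slope $\beta/4$. To reach $\beta$, the $Y$-vs-$B$ and $W$-vs-$Y$ slack must be created by the same voters: each unit of $\U_W$ has to count toward $W\succ Y$ and, through a low $\sigma_\beta(-y_i)$, also help $Y\succ B$, exactly as in the chain \eqref{eq:sig-lb}--\eqref{eq:sig-ub}.

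What I would try first is to follow the slack analysis of that proof directly. I will choose types where equality holds in the slope bound $\sigma_\beta(w_i-y_i)\le \tfrac{\beta}{4}w_i+\sigma_\beta(-y_i)$, which forces $y_i\in\{0\}\cup\{\approx 1\}$ and small $w_i$ only where $y_i=0$. I will also choose types where equality holds in Lemma~\ref{lem:linearization}, which forces $b_i-y_i\in\{0,1\}$. Solving the resulting equations determines the voter mixture. If three candidates cannot meet both equalities, I would add dummy candidates so that $W$'s Copeland score can match $B$'s without a direct cycle.

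Finally, I will let $s\to 0$ with $\beta s$ held fixed and then $\beta\to\infty$ to absorb the $O(e^{-\beta})$ and second-order $\sigma_\beta$ errors into the $\epsilon$.
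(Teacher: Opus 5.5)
The cycle-plus-tie-break frame is right, and so is your target $\textsc{Util}_Y\approx\textsc{Util}_B/2\approx\beta\,\textsc{Util}_W/2$. The gap is that the proposal never exhibits an instance reaching $(1-\epsilon)\beta$, and the voter family you commit to provably cannot reach it.

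Take your five types with $\beta s=c$ fixed, and put $d:=\sigma_1(c)-\tfrac12\le c/4$. Up to $O(e^{-\beta})$ terms, the two binding constraints read:
\begin{itemize}
\item $p_{W,Y}\ge\tfrac12$ becomes $d\,a_{(0,0,s)}\ge\tfrac12\bigl(a_{(1,1,0)}+a_{(0,1,0)}\bigr)$;
\item $p_{Y,B}\ge\tfrac12$ becomes $\tfrac12 a_{(0,1,0)}+d\,a_{(0,s,s)}\ge\tfrac12 a_{(1,0,0)}$.
\end{itemize}
Adding them gives $\textsc{Util}_B=a_{(1,0,0)}+a_{(1,1,0)}\le 2d\bigl(a_{(0,0,s)}+a_{(0,s,s)}\bigr)=2d\,\textsc{Util}_W/s$. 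Hence $\textsc{Util}_B/\textsc{Util}_W\le 2d\beta/c\le\beta/2$ for every choice of weights. The $\beta/2$ you observed is a hard ceiling of this family, not the result of a poor LP solution.

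Your diagnosis of the loss is also off. In a working instance, the voters who put $Y$ above $B$ still hurt $W$ against $Y$, and $W$ still buys that back at slope $\beta/4$. The real problem is elsewhere. In your family, the only voters who rank $Y$ above $B$ contribute nothing to $\textsc{Util}_B$. Meanwhile $(1,1,0)$ contributes to $\textsc{Util}_B$ but is neutral (exactly $\tfrac12$) in the $Y$--$B$ comparison.

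The missing idea is a voter with $(u_B,u_Y,u_W)=(1-\eta,1,0)$ for a small \emph{constant} $\eta$. Since $\beta\eta\to\infty$, this voter ranks $Y$ above $B$ with probability tending to $1$, yet still contributes $1-\eta$ to $\textsc{Util}_B$. Substitute it for $(1,1,0)$. The same two constraints then allow ratio $(2-\eta)\,2d\beta/c\to\beta(1-\eta/2)$ as $c\to 0$, with $p_{B,W}$ staying strictly above $\tfrac12$.

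This is essentially the paper's instance:
\begin{itemize}
\item types $(1-\eta,1,0)$, $(1,0,\delta/\beta)$, $(0,0,\delta/\beta)$ in proportions $p$, $q\approx p$, $1-p-q$, with $p\approx\delta/2$;
\item $p_{W,Y}$ and $p_{Y,B}$ set slightly above $\tfrac12$;
\item $p_{B,W}\to\sigma_1(\delta)>\tfrac12$ automatically, and $W$ is first in the tie-break.
\end{itemize}

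Your tightness heuristic excludes this type because you claim that equality in Lemma~\ref{lem:linearization} forces $b_i-y_i\in\{0,1\}$. That only describes the branch $b_i\ge y_i$, where one also needs $y_i=0$. In fact $(1,1,0)$ wastes about $\tfrac12$ per voter in the step $\sigma_\beta(-y_i)\le\sigma_\beta(0)$. The branch $b_i<y_i$, whose terms the lemma simply drops, is nearly tight exactly when $y_i-b_i=\eta$ is small but $\beta\eta$ is large.

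With this type, three candidates suffice and no dummy candidates are needed. One more point: you cannot hold $c=\beta s$ fixed. You must choose it small depending on $\epsilon$ and then let $\beta\to\infty$, to remove the factor $4d/c<1$.
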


\begin{proof}
We construct an instance with three candidates $B, W, Y$ and a distribution $\mathcal{V}$ over three voter types occurring with fractions $p$, $q$, and $1-p-q$ respectively:
\begin{center}
\begin{tabular}{lccc}
& $u_B$ & $u_Y$ & $u_W$ \\[2pt]
Type I (fraction $p$): & $1 - \eta$ & $1$ & $0$ \\
Type II (fraction $q$): & $1$ & $0$ & $\delta/\beta$ \\
Type III (fraction $1-p-q$): & $0$ & $0$ & $\delta/\beta$
\end{tabular}
\end{center}
where $\eta, \delta \in (0,1)$ are constants to be chosen later. The tie-breaking order is $W \triangleright Y \triangleright B$, used both for pairwise ties and ties in overall Copeland scores.

The social utilities are $\textsc{Util}_B = p(1-\eta) + q$ and $\textsc{Util}_W = (1-p)\frac{\delta}{\beta}$. If $W$ wins, the distortion is:
\begin{equation}\label{eq:cop-distortion}
\D = \frac{\textsc{Util}_B}{\textsc{Util}_W} = \beta \cdot \frac{p(1-\eta) + q}{\delta(1-p)}.
\end{equation}

\paragraph{Ensuring $W$ wins.}
We choose $p$ and $q$ so that all three population pairwise margins are strictly above $\frac{1}{2}$. By concentration, as $n \to \infty$ each empirical margin $s_{j,j'}$ converges to $p_{j,j'}$, so all three exceed $\frac{1}{2}$ with high probability. Each candidate then wins exactly one pairwise comparison, giving Copeland scores $(1,1,1)$, and $W$ wins by $\triangleright$.

To achieve this, we set $p_{W,Y} = \frac{1}{2} + \epsilon'$ and $p_{Y,B} = \frac{1}{2} + \epsilon'$, where $\epsilon' = 1/\beta > 0$. Computing:
\begin{align*}
p_{W,Y} &= p \cdot \sigma_\beta(-1) + (1-p) \cdot \sigma_\beta(\delta/\beta), \\
p_{Y,B} &= p \cdot \sigma_\beta(\eta) + q \cdot \sigma_\beta(-1) + (1-p-q) \cdot \tfrac{1}{2}.
\end{align*}

\begin{equation}\label{eq:p-value}
\text{Solving $p_{W,Y} = \frac{1}{2} + \epsilon'$ for $p$: } \qquad  p = \frac{\sigma_\beta(\delta/\beta) - \frac{1}{2} - \epsilon'}{\sigma_\beta(\delta/\beta) - \sigma_\beta(-1)}.
\end{equation}
\begin{equation}\label{eq:q-value}
\text{Solving $p_{Y,B} = \frac{1}{2} + \epsilon'$ for $q$:} \qquad q = \frac{p(\sigma_\beta(\eta) - \frac{1}{2}) - \epsilon'}{\frac{1}{2} - \sigma_\beta(-1)}.
\end{equation}

It remains to verify that $p_{B,W} > \frac{1}{2}$. The third margin is:
\[
p_{B,W} = p \cdot \sigma_\beta(1-\eta) + q \cdot \sigma_\beta(1 - \delta/\beta) + (1-p-q)(1 - \sigma_\beta(\delta/\beta)).
\]
As $\beta \to \infty$, the terms $\sigma_\beta(1-\eta), \sigma_\beta(1-\delta/\beta) \to 1$ and $\sigma_\beta(-1) \to 0$, while $\sigma_\beta(\delta/\beta) \to s := \sigma_1(\delta) \in (\frac{1}{2}, 1)$. Since $\epsilon' \to 0$, the parameters $p, q$ converge to their $\epsilon' = 0$ limits (computed below), and $p_{B,W} \to 2p + (1-2p)(1-s) = 1 - s + 2ps$. Substituting $p \to (s - \frac{1}{2})/s$ gives 
$p_{B,W} \to 1 - s + 2s \cdot \frac{s - \frac{1}{2}}{s} = s > \frac{1}{2}$. Hence $p_{B,W} > \frac{1}{2}$ for all sufficiently large $\beta$.

For large $\beta$, $\sigma_\beta(\delta/\beta) - \frac{1}{2} \to s - \frac{1}{2} > 0$ while $\epsilon' \to 0$, so $p > 0$. Similarly $\sigma_\beta(\eta) \to 1$ ensures $q > 0$. Since $\sigma_\beta(\delta/\beta) \leq \sigma_\beta(1) < 1$ and $\sigma_\beta(\eta) - \frac{1}{2} \leq \frac{1}{2} - \sigma_\beta(-1)$, we have $p \leq \frac{1}{2}$ and $q \leq p$, so $p + q \leq 1$.

As $\beta \to \infty$, $\epsilon' \to 0$, so $p$ and $q$ converge to the $\epsilon' = 0$ limits. Using $\sigma_\beta(-1) \to 0$, $\sigma_\beta(\eta) \to 1$, and the Taylor expansion $\sigma_1(\delta) = \frac{1}{2} + \frac{\delta}{4}(1 - \frac{\delta^2}{12}) + O(\delta^5)$, let $\alpha = \delta(1 - \frac{\delta^2}{12})$ for brevity. Then:
\begin{align*}
p &\to \frac{\alpha/4}{\frac{1}{2} + \alpha/4} = \frac{\alpha}{2 + \alpha}, \qquad
q \to \frac{p \cdot \frac{1}{2}}{\frac{1}{2}} = p.
\end{align*}
Substituting into \eqref{eq:cop-distortion}:
\begin{align*}
\lim_{\beta \to \infty} \frac{\D}{\beta}
&= \frac{p(1-\eta) + q}{\delta(1-p)}
= \frac{\frac{\alpha}{2+\alpha}(1-\eta) + \frac{\alpha}{2+\alpha}}{\delta \cdot \frac{2}{2+\alpha}}
= \frac{\alpha(2-\eta)}{2\delta}
= (1 - \tfrac{\delta^2}{12})(1 - \tfrac{\eta}{2}).
\end{align*}

For any $\epsilon \in (0, \frac{1}{4})$, choosing $\eta = \epsilon$ and $\delta = \sqrt{3\epsilon}$ gives:
\(
(1 - \tfrac{\delta^2}{12})(1 - \tfrac{\eta}{2}) = (1 - \tfrac{\epsilon}{4})(1 - \tfrac{\epsilon}{2}) \geq 1 - \epsilon,
\)
hence $\D \geq (1-\epsilon)\beta$ for all sufficiently large $\beta$.
\end{proof}

\section{\Bb}

We now discuss the bounds for \Bb. Perhaps surprisingly, our upper bound is the same as that of \Cc, i.e., $\beta \frac{1+e^{-\beta}}{1-e^{-\beta}}.$ This improves the $O(\beta^2)$ bound of prior work \cite{golz2025distortion}. \citet{golz2025distortion} gave a lower bound of $(1-o(1))\beta$ as $\beta \rightarrow \infty$\footnote{
\citet{golz2025distortion} proves the \((1-o(1))\beta\) lower bound in a slightly more general model in which the elicitation process induces \emph{non-uniform} frequencies of pairwise comparisons (equivalently, different candidates may be ``sampled'' with different probabilities when comparisons are generated). 
Although our baseline model draws full rankings from Plackett--Luce (which fixes these frequencies), their lower-bound construction can be simulated in our setting via \emph{cloning}: by replacing a candidate that is sampled with higher probability by identical clones (and splitting its sampling mass among the clones), we can reproduce the same effective comparison weights while keeping the underlying voting rule unchanged.
Conversely, our upper-bound analysis is robust to such non-uniform sampling as well: the proof only uses linear inequalities that hold \emph{pointwise} for each compared pair (via the Lipschitz/concavity properties of \(\sigma_\beta\)), and thus continues to hold after taking any convex combination of pairwise comparisons, i.e., under arbitrary sampling weights.}. Therefore, for \B as well, we now have almost close upper and lower bounds.
\begin{theorem}\label{thm:borda_ub}
   The utilitarian distortion of \B under the PL model is at most $\beta \frac{1+e^{-\beta}}{1-e^{-\beta}}.$ 
\end{theorem}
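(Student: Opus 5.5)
We plan to follow the template of the proof of Theorem~\ref{thm:copeland_ub}. Fix a candidate $W$ selected by \B with non-vanishing probability as $n\to\infty$. By the SLLN its population Borda score is at least that of $B$, i.e. $\sum_{j\neq W} p_{W,j}\ge \sum_{j\neq B} p_{B,j}$. We rewrite this symmetrically over all $j\in C$, using the convention $p_{j,j}=\frac12=\s(0)$:
\[
\sum_{j\in C}\sum_{i\in V}\bigl[\s(w_i-u_{i,j})-\s(b_i-u_{i,j})\bigr]\;\ge\;0 .
\]
The $j=W$ term is $\frac n2-np_{B,W}$ and the $j=B$ term is $np_{W,B}-\frac n2$, so this really is the Borda comparison. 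Each summand compares $W$ and $B$ against a common opponent $j$, exactly as the pair $(W,Y)$, $(B,Y)$ appears in the Copeland proof.

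For each fixed opponent $j$ (write $y_i=u_{i,j}$) we bound the inner sum by two inequalities. The first is the Lipschitz bound from the Copeland proof: $\sum_i \s(w_i-y_i)\le \frac{\beta}{4}\U_W+\sum_i\s(-y_i)$. The second is Lemma~\ref{lem:linearization} with $X=B$, $Z=j$: $\sum_i \s(b_i-y_i)-\s(-y_i)\ge \frac{\U_B-\U_j}{2}k$, where $k=\frac{1-e^{-\beta}}{1+e^{-\beta}}$. Subtracting, the $\s(-y_i)$ terms cancel and each term is at most $\frac{\beta}{4}\U_W-\frac k2(\U_B-\U_j)$. This is the per-opponent analogue of \eqref{eq:cop_uw_core}.

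Summing naively over the $m$ opponents gives $\frac{\beta m}{4}\U_W\ge\frac k2\sum_j(\U_B-\U_j)$. This only yields an $m$-independent bound if the opponents on average have welfare at most, say, $\U_B/2$.

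We would therefore split opponents into a low set $L=\{j:\U_j<\U_B/2\}$ and a high set $H=\{j:\U_j\ge\U_B/2\}$. On $L$ the bound above gives a loss of at least $\frac k4\U_B-\frac\beta4\U_W$ per opponent. For $j\in H$ we want a complementary estimate. The natural goal is that the term is at most something like $\frac{\beta}{4}\U_W-\frac k2(\U_j-\U_W)$, obtained by applying Lemma~\ref{lem:linearization} with roles shifted (the pair $j$ versus $W$). One could also get it by noting that if $p_{W,j}\ge\frac12$ for some $j\in H$, Lemma~\ref{lem:2cand} already gives $\U_j\le\frac{\beta}{2k}\U_W$, hence $\U_B\le\frac{\beta}{k}\U_W$, which is the target. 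So we may assume $p_{W,j}<\frac12$ for all $j\in H$. Each such term is then strictly less than $\frac n2-np_{B,j}$, which in turn is bounded using $\U_B$ versus $\U_j$. With that reduction, the aim is to show every opponent contributes at most $\frac{\beta}{4}\U_W-\frac k4\U_B$. Summing and using $\ge0$ would then give $\U_B\le\frac{\beta}{k}\U_W$, which is the claimed $\beta\frac{1+e^{-\beta}}{1-e^{-\beta}}$. The final passage from $\U$ to $\textsc{Util}$, including the degenerate case $\textsc{Util}_B=0$, goes exactly as at the end of the Copeland proof.

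The main obstacle is the high opponents $j\in H$. The linearization lemma is lossy there because $\s$ is convex on the negative axis: against an opponent the voter likes, $\s(b-y)-\s(-y)$ can be tiny. So a direct per-opponent bound may fail, and the per-opponent uniform loss $\frac k4\U_B$ may not hold term by term. If it does not, the fallback is to make the argument global. We would average the per-voter quantity $\sum_j[\s(w_i-u_{i,j})-\s(b_i-u_{i,j})]$ and pair each high opponent $j$ with the Borda comparison of $W$ against $j$ itself, using that $W$'s score also beats $j$'s. This charges the deficit of $B$ against strong opponents to $W$'s required wins over them.
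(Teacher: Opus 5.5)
Your main line has a genuine gap where you suspect it: nothing in the proposal bounds the terms of the high opponents $j\in H$. The reduction to $p_{W,j}<\frac12$ does not help, because $\frac n2-np_{B,j}$ can be positive ($B$ may lose the pairwise comparison to an opponent of comparable welfare). So no per-opponent loss of $\frac k4\mathcal{U}_B$ follows.

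The hypotheses your main line actually uses are $\textsc{Borda}(W)\ge\textsc{Borda}(B)$, optimality of $B$, and $p_{W,j}<\frac12$ on $H$. These cannot yield $\beta/k$ at all. Take candidates $B,W,Y_1,\dots,Y_{m-2}$ and the following voters:
\begin{itemize}
\item mass $a=\frac14$ with $(u_B,u_{Y_j},u_W)=(1-t,1,0)$;
\item mass $at$ with $(1,0,0)$;
\item mass $z$ with $(0,0,\delta/\beta)$;
\item all remaining voters with every utility equal to $0$.
\end{itemize}
Then $\textsc{Util}_B=\textsc{Util}_{Y_j}=a$. Each $p_{W,Y_j}-p_{B,Y_j}\ge z(\sigma_1(\delta)-\frac12)-a(e^{-\beta t}+\frac t2)$, and $p_{W,B}-p_{B,W}\ge -2a$. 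Set $t=\ln(2\beta)/\beta$, take $\delta$ small and $\beta,m$ large, and choose $z$ to make the Borda difference nonnegative. Then $\textsc{Borda}(W)\ge\textsc{Borda}(B)$, $p_{W,B}<\frac12$, and $p_{W,Y_j}<\frac12$ for every $Y_j\in H$. Yet $\textsc{Util}_B/\textsc{Util}_W$ is of order $\beta^2/\ln\beta$, far above $\beta/k$. Of course $W$ is not the Borda winner here, since every $Y_j$ outscores it. A correct proof must use more than $W$ beating $B$.

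The missing ingredient is the second consequence of $W$ being the maximizer. Borda scores sum to $n\binom m2$, so $W$'s score is at least the average: $\sum_{j\ne W}p_{W,j}\ge\frac{m-1}{2}$. Your fallback of comparing $W$ with every $j$ gives exactly this after summing over $j$, and the aggregate form is what is needed, not a pairing.

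For $j\notin\{W,B\}$, combine the Lipschitz bound with the chord bound $\sigma_\beta(y)\ge\frac12+\frac k2 y$ on $[0,1]$. This gives $p_{W,j}-\frac12\le\frac\beta4\textsc{Util}_W-\frac k2\textsc{Util}_j$. Adding your bound $p_{W,j}-p_{B,j}\le\frac\beta4\textsc{Util}_W-\frac k2(\textsc{Util}_B-\textsc{Util}_j)$ cancels $\textsc{Util}_j$. Hence $2p_{W,j}-p_{B,j}-\frac12\le\frac\beta2\textsc{Util}_W-\frac k2\textsc{Util}_B$ for every opponent, high or low.

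Dispose of the case $p_{W,B}\ge\frac12$ by Lemma~\ref{lem:2cand}. Otherwise both $\sum_{j\ne W,B}(p_{W,j}-p_{B,j})$ and $\sum_{j\ne W,B}(p_{W,j}-\frac12)$ are nonnegative. Summing the per-opponent bound then gives $\textsc{Util}_B\le\frac\beta k\textsc{Util}_W$, with no $L$/$H$ split. This is the paper's proof, phrased there as a case split on $\sum_j\mathcal{U}_j$ versus $\frac{m-2}{2}\mathcal{U}_B$.
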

\begin{proof}

Let $W$ be any candidate selected by \B with non-zero probability in the large election limit. As established in \S\ref{sec:largeElec}, $W's$ expected Borda score must be maximal among all candidates, and specifically must be at least that of the socially optimal candidate $B$. Furthermore, since the sum of Borda scores is fixed, the maximum score must be at least the average score. Thus, the following conditions must hold for $W$.

\begin{enumerate}
    \item $\sum\limits_{j \in C\setminus \{W\}} p_{W,j} \geq \sum\limits_{j \in C\setminus \{B\}} p_{B,j}.$ That is, W must have a Borda score at least equal to that of B.
    \item $\sum\limits_{j \in C\setminus \{W\}} p_{W,j} \geq \frac{(m-1)}{2}.$ That is, W must have a higher Borda score than average.
\end{enumerate}

If $p_{W,B} \geq \frac{1}{2},$ the result follows from Lemma~\ref{lem:2cand}. If not, we can update the above conditions by subtracting $p_{W,B}$ from both sides.

\begin{enumerate}
    \item \label{cond:1} $\sum\limits_{j \in C\setminus \{W,B\}} p_{W,j} \geq \sum\limits_{j \in C\setminus \{W,B\}} p_{B,j}.$ 
    \item \label{cond:2} $\sum\limits_{j \in C\setminus \{W,B\}} p_{W,j} \geq \frac{(m-2)}{2}.$
\end{enumerate}

We use $y^{(j)}_i$ to denote $u_{i,j}$ for all $j \in C\setminus \{W,B\}.$ We use $b_i$ and $w_i$ for $u_{i,B}$ and $u_{i,W}$ respectively.

In terms of the empirical quantities, condition~\ref{cond:1} corresponds to the following:
\begin{equation} \label{eq:borda_0}
    \sum\limits_{j \in C\setminus \{W,B\}} \sum_{i \in V} \s(w_i - y^{(j)}_i) \geq \sum\limits_{j \in C\setminus \{W,B\}} \sum_{i \in V} \s(b_i - y^{(j)}_i) =  \sum\limits_{j \in C\setminus \{W,B\}} \sum_{i \in V} (1- \s(y^{(j)}_i - b_i)).
\end{equation}

Similar to the proof for \Cc, we will upper bound $ \sum\limits_{i \in V} \s(w_i - y^{(j)}_i)$ by using the upper bound on the slope of $\s$, and lower bound $\sum\limits_{i \in V} (1-\s(y^{(j)}_i - b_i)) $ by invoking Lemma~\ref{lem:linearization}.
\begin{equation}\label{eq:borda_1}
    \sum\limits_{i \in V} \s(w_i - y^{(j)}_i) \leq \sum\limits_{i \in V} \frac{\beta}{4} w_i + \s(- y^{(j)}_i) = \frac{\beta \U_W}{4} + \sum\limits_{i \in V} \s(- y^{(j)}_i).
\end{equation}
\begin{equation} \label{eq:borda_2}
    \sum\limits_{i \in V} (1-\s(y^{(j)}_i - b_i)) 
    \geq 
    \sum\limits_{i \in V} \s(-y^{(j)}_i) + \frac{(\U_B - \U_j)}{2}\frac{1-e^{-\beta}}{1+e^{-\beta}}.
\end{equation}

From Equations~\eqref{eq:borda_0},~\eqref{eq:borda_1} and~\eqref{eq:borda_2}, we get
\begin{equation} \label{eq:borda_6}
   (m-2) \frac{\beta \U_W}{4}  \geq \sum\limits_{j \in C\setminus \{W,B\}} \frac{(\U_B - \U_j)}{2}\frac{1-e^{-\beta}}{1+e^{-\beta}}.
\end{equation}
In terms of empirical quantities, condition~\ref{cond:2} corresponds to the following:

 \(  \sum\limits_{j \in C\setminus \{W,B\}} \sum_{i \in V} \s(w_i - y^{(j)}_i) \geq \frac{(m-2)n}{2}.\) 
 
Using Equation~\eqref{eq:borda_1}, we get
\begin{align} 
  & (m-2) \frac{\beta \U_W}{4} +  \sum\limits_{j \in C\setminus \{W,B\}} \sum_{i \in V} \s(- y^{(j)}_i)  \geq \frac{(m-2)n}{2}\\
   \implies & (m-2) \frac{\beta \U_W}{4} +  \sum\limits_{j \in C\setminus \{W,B\}} \sum_{i \in V} 1- \s(y^{(j)}_i)  \geq \frac{(m-2)n}{2}\\
   \implies & (m-2) \frac{\beta \U_W}{4} \geq \sum\limits_{j \in C\setminus \{W,B\}} \sum_{i \in V}  \s(y^{(j)}_i) - \frac{(m-2)n}{2}\\
   \intertext{Since $\s(x)$ is concave increasing in $x\geq 0,$ we have $\s(x) \geq \s(0) + x (\s(1)-\s(0)) = \frac{1}{2} + \frac{x (1-e^{-\beta}) }{2 (1+e^{-\beta})}$}
   &  (m-2) \frac{\beta \U_W}{4} \geq \sum\limits_{j \in C\setminus \{W,B\}} \frac{\U_j}{2} \frac{(1-e^{-\beta}) }{(1+e^{-\beta})}\label{eq:borda_5}
\end{align}
%


If $\sum_{j \in C\setminus\{W,B\}} \U_j \ge \left(\frac{m-2}{2}\right)\U_B,$
then \eqref{eq:borda_5} yields
\(
\frac{\beta \U_W}{2}\cdot \frac{1+e^{-\beta}}{1-e^{-\beta}} \;\ge\; \U_B.\)

Otherwise, if $\sum_{j \in C\setminus\{W,B\}} \U_j \le \left(\frac{m-2}{2}\right)\U_B,$
then \eqref{eq:borda_6} implies

\(
(m-2)\frac{\beta \U_W}{2}\cdot \frac{1+e^{-\beta}}{1-e^{-\beta}}
\;\ge\;
\sum_{j \in C\setminus\{W,B\}} (\U_B-\U_j)
\;=\;
(m-2)\U_B - \sum_{j \in C\setminus\{W,B\}} \U_j
\;\ge\;
\left(\frac{m-2}{2}\right)\U_B,
\)
and hence again $\frac{\U_B}{\U_W}\le \beta \frac{1+e^{-\beta}}{1-e^{-\beta}}$.

If $\textsc{Util}_B=0$ then $\textsc{Util}_j=0$ for all $j$ (utilities are nonnegative), so the
distortion is $1$. Otherwise $\textsc{Util}_B>0$, and by the strong law,
$\U_B/n\to \textsc{Util}_B$ and $\U_W/n\to \textsc{Util}_W$ almost surely.


Dividing by $n$ and letting $n\to\infty$ yields $\textsc{Util}_W \ge c_\beta \textsc{Util}_B>0$.
Therefore by the SLLN,
\[
\frac{\textsc{Util}_B}{\textsc{Util}_W}
=\lim_{n\to\infty}\frac{\U_B/n}{\U_W/n}
=\lim_{n\to\infty}\frac{\U_B}{\U_W},
\]
and the bound established above on $\U_B/\U_W$ implies the same bound on
$\textsc{Util}_B/\textsc{Util}_W$.
\end{proof}

\section{A Lower Bound for Finite-Precision Tournament-Based Rules}
\label{sec:universal-lb}

The results of the previous section show that tournament-based rules like \C and \B achieve distortion $O(\beta)$ independent of $m$. A natural question is whether \emph{any} deterministic rule can match the $\frac{\beta}{2}\frac{1+e^{-\beta}}{1-e^{-\beta}}$ distortion achieved by the randomized \ML rule~\citep{golz2025distortion}. We show that the answer is no for a broad class of rules.

Many practical voting rules, including \Cc, depend on the rankings only through the pairwise majority relation (who beats whom) or through vote margins rounded to some finite precision. We formalize this as the class of \emph{finite-precision tournament-based rules} and prove that every deterministic rule in this class has distortion at least $(\frac{5}{8} - \epsilon)\beta$ for any $\epsilon >0$, establishing a gap between deterministic and randomized tournament-based rules.

\begin{definition}[Finite-Precision Tournament-Based Rule]
\label{def:finite-precision}
Given a profile $\pi^n$, recall that $s_{j,j'} \in [0,1]$ denotes the empirical fraction of voters ranking $j$ above $j'$. Fix a precision parameter $\rho > 0$ and define the coarsened edge weight
\[
W_\rho(j,j') \;:=\; \left\lfloor \frac{s_{j,j'} - \frac{1}{2}}{\rho} \right\rfloor \in \mathbb{Z}.
\]
A voting rule is \emph{$\rho$-finite-precision tournament-based} if its output depends on $\pi^n$ only through $W_\rho$.
\end{definition}


\begin{theorem}[ Lower bound for finite-precision tournament-based rules]
\label{thm:detTournamentGraph_lb}
Fix $\rho > 0$. For every $\varepsilon > 0$, there exists $\beta_0 = \beta_0(\rho, \varepsilon) > 0$ such that for all $\beta \ge \beta_0$, the following holds: for every deterministic $\rho$-finite-precision tournament-based voting rule $f$, there exists an election instance with $m = 3$ candidates and utilities in $[0,1]$ satisfying
\(
\mathcal{D}(f) \;\ge\; \left(\frac{5}{8} - \varepsilon\right)\beta.
\)
\end{theorem}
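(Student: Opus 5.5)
The plan is an indistinguishability argument. Since $f$ depends on $\pi^n$ only through $W_\rho$, any two instances whose population margins $p_{j,j'}$ lie in the interiors of the same $\rho$-cells produce, with probability tending to $1$ as $n\to\infty$ (by the concentration of \S\ref{sec:largeElec}), the same coarsened graph. On such instances the deterministic $f$ must output the same labeled candidate.

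\textbf{Step 1: fix a symmetric target graph.} Take three labels $a,b,c$ and the cyclic pattern $a\to b\to c\to a$. Every forward margin should lie in $(\tfrac12,\tfrac12+\rho)$, so that $W_\rho=0$ on forward edges and $W_\rho=-1$ on reversed edges. I will use margins $\tfrac12+\epsilon'$ with $\epsilon'=\min(\rho/2,1/\beta)$, strictly inside the cell. This coarsened graph $G^\ast$ is invariant under the rotation $a\mapsto b\mapsto c\mapsto a$. Let $x^\ast=f(G^\ast)$. It then suffices to build a single three-candidate instance with a cyclic majority relation in which a designated candidate $W$ has $\textsc{Util}_B/\textsc{Util}_W\ge(\tfrac58-\varepsilon)\beta$, with edges $W\to Y$, $Y\to B$, $B\to W$ all of margin $\tfrac12+\epsilon'$. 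Relabeling $W:=x^\ast$ and $Y,B$ as its out- and in-neighbours in $G^\ast$ gives an instance realizing $G^\ast$ in which $f$ selects $W$. In this way the adversary defeats every deterministic rule in the class, irrespective of its tie-breaking.

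\textbf{Step 2: the construction.} I start from the three-type template in the proof of Theorem~\ref{thm:copeland_lb}, with $W$'s utilities of order $\delta/\beta$ and $B,Y$ having utilities near $0$ or $1$. The new difficulty is that the edge $B\to W$ must now also have margin only $\tfrac12+\epsilon'$. In the Copeland construction that margin tended to $s=\sigma_1(\delta)$, which was far above $\tfrac12$. To pull $p_{B,W}$ down I would add a fourth voter type, or modify Type III, that prefers $W$ strongly over $B$ at low cost to $\textsc{Util}_W$. A natural choice is voters with $u_W=\delta'/\beta$ and $u_B=u_Y=0$, together with voters having $u_W=1$ and $u_B=0$. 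The latter cost $\textsc{Util}_W$ linearly, while the former shift margins only by about $\delta'/4$ per unit mass.

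As $\beta\to\infty$, all $\sigma_\beta$ evaluations at $O(1)$ arguments become $0/1$, and those at arguments of order $1/\beta$ become $\sigma_1(\cdot)$. The three margin equalities then become linear in the type fractions, and the distortion ratio becomes $\beta$ times a ratio of linear forms. I will solve this limiting linear-fractional program: choose the fractions from two of the equalities, as in \eqref{eq:p-value}--\eqref{eq:q-value}, and use the third equality for $p_{B,W}$. I will then optimize over the free constants ($\delta$, $\delta'$, $\eta$, and the mass split), using the upper bound $\sigma_1'\le\tfrac14$ to see which choices are extremal. I expect the optimum to equal $\tfrac58$. Continuity in $\beta$ then gives $(\tfrac58-\varepsilon)\beta$ for all $\beta\ge\beta_0(\rho,\varepsilon)$, where $\beta_0$ also ensures that $\epsilon'<\rho$ and that all the fractions lie in $[0,1]$.

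\textbf{Main obstacle.} The hard part is Step 2: designing the types so that the $B\to W$ edge is forced to be nearly tied while $\textsc{Util}_W$ stays of order $1/\beta$, and identifying the exact constant $\tfrac58$. I would first verify that the limiting program is linear after the $\beta\to\infty$ substitution. I would then solve it by checking its vertices, that is, solutions with at most three active voter types. If the naive four-type family falls short of $\tfrac58$, I would allow $u_B$ for the $W$-supporting voters to be a tunable constant and reoptimize.
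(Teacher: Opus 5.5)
Your Step~1 is sound; it is the same cyclic-relabeling argument the paper uses.

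\textbf{The gap is Step~2.} You never exhibit the instance or derive the constant. The proposal ends with ``I expect the optimum to equal $\frac58$'' and a fallback plan, so the quantitative content of the theorem is unproved. Moreover, the obstacle you identify is not real. In the three-type template of Theorem~\ref{thm:copeland_lb}, the $B\to W$ margin tends to $1-(1-p-q)\sigma_1(\delta)$. For the Copeland choice of $p,q$ this limit is $\sigma_1(\delta)=\frac12+\frac12\tanh(\delta/2)$. It was far from $\frac12$ only because that proof took $\delta=\sqrt{3\epsilon}$; taking $\delta$ small fixes it. So no fourth type is needed, and the $u_W=1$ type is counterproductive because it raises $\textsc{Util}_W$ at unit rate.

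Your own Step~1 criterion only requires the forward margins to lie in $(\frac12,\frac12+\rho)$, not to be equal. Hence the Copeland instance with $\delta<4\rho$ and $\beta$ large already produces the symmetric cell, with coefficient $(1-\frac\eta2)\frac{\tanh(\delta/2)}{\delta/2}$. If you insist on equal limiting margins $\frac12+\gamma$, as the paper does, the system
\[
(1-p)s=\tfrac12+\gamma,\qquad p+\tfrac{1-p-q}{2}=\tfrac12+\gamma,\qquad 1-(1-p-q)s=\tfrac12+\gamma
\]
gives $p\approx6\gamma$, $q\approx4\gamma$, $\delta\approx16\gamma$. The resulting coefficient is $\frac{6(1-\eta)+4}{16}=\frac{5-3\eta}{8}$, which is where $\frac58$ comes from.

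\textbf{The limiting procedure you describe cannot deliver $\frac58$ with your choice $\epsilon'\le 1/\beta$.} Lemma~\ref{lem:linearization} (with $X=B$, $Z=W$), together with $\sigma_\beta(-w)\ge\frac12-\frac\beta4 w$, gives
\[
\Bigl(\tfrac\beta4+\tfrac12\tanh\tfrac\beta2\Bigr)\textsc{Util}_W\;\ge\;\tfrac12\tanh\tfrac\beta2\;\textsc{Util}_B-\bigl(p_{B,W}-\tfrac12\bigr).
\]
Two cases arise:
\begin{itemize}
\item If your limit keeps $\textsc{Util}_B$ bounded away from $0$ while $p_{B,W}-\frac12=\epsilon'\to0$, this inequality caps the coefficient $\textsc{Util}_B/(\beta\,\textsc{Util}_W)$ at $\frac12+o(1)$.
\item Otherwise the limit is degenerate. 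In the three-type template, the $\gamma=0$ system forces $p=q=0$ and $s=\frac12$, so your ``limiting linear-fractional program'' gives $0/0$.
\end{itemize}
Reaching $\frac58$ requires $\textsc{Util}_B$ of the same order as the margin excess; in the paper $\textsc{Util}_B\approx 10\gamma$ against excess $\gamma$. You must therefore either fix the excess independently of $\beta$ and send $\beta\to\infty$ first, as the paper does, or rescale all fractions and $\delta$ by $\epsilon'$ before passing to the limit.
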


\begin{proof}[Proof sketch]
We construct an instance with three candidates $B, W, Y$ and three voter types such that the coarsened tournament $W_\rho$ forms a symmetric cycle: $W_\rho(W,Y) = W_\rho(Y,B) = W_\rho(B,W)$. 
Since $W_\rho$ forms a symmetric 3-cycle, we construct three instances by cyclically permuting the assignment of candidates to roles $(B, W, Y)$. All three instances yield the same $W_\rho$, so the deterministic rule $f$ selects the same candidate index in each. In one of the three role assignments, that candidate plays the role of $W$ (the low-welfare candidate). It therefore suffices to construct an instance with a symmetric coarsened tournament where $\textsc{Util}_B / \textsc{Util}_W \geq (\frac{5}{8} - \varepsilon)\beta$.

We define three voter types: Type I (fraction $p$) has utilities $(u_B, u_Y, u_W) = (1-\eta, 1, 0)$; Type II (fraction $q$) has $(1, 0, \delta/\beta)$; Type III (fraction $1-p-q$) has $(0, 0, \delta/\beta)$, where $\eta, \delta \in (0,1)$ are parameters. The population pairwise probabilities are $p_{W,Y} = p \cdot \sigma_\beta(-1) + (1-p) \cdot \sigma_1(\delta)$, $p_{Y,B} = p \cdot \sigma_\beta(\eta) + q \cdot \sigma_\beta(-1) + \frac{1-p-q}{2}$, and $p_{B,W} = p \cdot \sigma_\beta(1-\eta) + q \cdot \sigma_\beta(1-\delta/\beta) + (1-p-q)(1-\sigma_1(\delta))$. We choose $p, q, \delta$ so that all three margins equal $\frac{1}{2} + \gamma$ in the limit $\beta \to \infty$, for some $\gamma \in (0, 2\rho/3)$. Solving this system yields $p = \frac{6\gamma + 4\gamma^2}{1+6\gamma}$, $q = \frac{4\gamma - 8\gamma^2}{1+6\gamma}$, and $\delta = 16\gamma + O(\gamma^2)$.

The population utilities are $\textsc{Util}_B = p(1-\eta) + q$ and $\textsc{Util}_W = (1-p)\delta/\beta$. For finite $\beta$, the true pairwise probabilities deviate from the idealized values by at most $E(\beta, \eta, \delta) = e^{-\beta} + e^{-\beta\eta} + e^{-\beta(1-\eta)} + e^{-\beta+\delta}$, which vanishes as $\beta \to \infty$. 
Choosing $\beta$ large enough that $E < \gamma/4$, all population margins 
lie within $\gamma/4$ of $\frac{1}{2} + \gamma$. By Hoeffding's inequality, 
the empirical margins concentrate within $\gamma/4$ of the population values 
with high probability for large $n$. Thus all empirical margins fall in 
$(\frac{1}{2} + \gamma/2, \frac{1}{2} + 3\gamma/2) \subset 
(\frac{1}{2}, \frac{1}{2} + \rho)$, where the first inclusion uses 
$\gamma > 0$ and the second uses $\gamma < 2\rho/3$.
 Hence $W_\rho(W,Y) = W_\rho(Y,B) = W_\rho(B,W) = 0$.
Analyzing the welfare ratio as $\gamma \to 0$: we have $p = 6\gamma + O(\gamma^2)$, $q = 4\gamma + O(\gamma^2)$, and $\delta = 16\gamma + O(\gamma^2)$. Substituting into $\frac{\textsc{Util}_B}{\textsc{Util}_W} = \beta \cdot \frac{p(1-\eta) + q}{\delta(1-p)}$ yields coefficient $\frac{(6\gamma)(1-\eta) + 4\gamma}{16\gamma} + O(\gamma) = \frac{10 - 6\eta}{16} + O(\gamma) = \frac{5-3\eta}{8} + O(\gamma)$. Setting $\eta = \frac{4\varepsilon}{3}$ gives $\frac{5-3\eta}{8} = \frac{5}{8} - \frac{\varepsilon}{2}$, and choosing $\gamma$ small enough yields distortion at least $(\frac{5}{8} - \varepsilon)\beta$. Full details appear in Appendix~\ref{app:universal-lb}.
\end{proof}

\section{Discussion}
We study utilitarian distortion under probabilistic voting where voters' behavior follows the Plackett-Luce model. Our results align the distortion framework with axiomatic intuitions. Particularly, tournament-based rules \C and \B have much better distortion guarantees than \Plu and \textsc{PluralityVeto}. \C satisfies the \textit{Condorcet criterion} (if a candidate beats all others in pairwise comparisons, then it must be the winner), while \Plu and \PluV do not. \B is used in practice for AI alignment \cite{siththaranjan2023distributional} and is shown to be the maximum likelihood estimator under some conditions~\cite{truchon2008borda}. It is the only rule that satisfies the axioms of neutrality, consistency, faithfulness, and the cancellation property simultaneously \cite{young1974axiomatization}. Recently \citet{maskin2025borda} showed that \B is the unique voting rule to satisfy a weakening of the \textit{independence of irrelevant alternatives} (IIA) axiom called \textit{modified independence of irrelevant alternatives (MIIA)}, in conjunction with other natural axioms.

There is a longstanding debate whether  \Cc, which is ``majoritarian'' (on pairwise comparisons) is ``better'' than \Bb, which is ``compromise-based'' (rewards high average performance across all comparisons). The series of papers from Saari and Risse, where Saari provides arguments in support of \B and Risse for \Cc, is particularly interesting~\cite{saari1985optimal,risse2001arrow, saari2006better, saari2003capturing, risse2005count,  risse2003democracy, saari2023selecting}. For metric distortion, \C is better both under deterministic \cite{anshelevich2015approximating} and probabilistic voting \cite{goyal2025metric}. This work presents an alternative perspective on this comparison, where neither dominates the other.
While we provide near-tight bounds for several voting rules, important open questions remain.

\begin{openq}
What is the optimal distortion for deterministic voting rules?
\end{openq}
We show a lower bound of $(\frac{5}{8}-\varepsilon)\beta$ for finite-precision tournament rules, but the optimal deterministic rule may be better. In metric distortion, the two-candidate case fortuitously yielded the universal lower bound, yet proving tightness required several years of effort~\citep{goel2017metric, munagala2019improved, gkatzelis2020resolving}. Under probabilistic voting, the problem is further complicated by the need to reason about distributions over rankings rather than fixed preference profiles. Even the $\frac{\beta}{2}\frac{1+e^{-\beta}}{1-e^{-\beta}}$ bound for randomized rules from prior work \citep{golz2025distortion} is not universal and holds only for rules that satisfy the PCLC (which excludes \Rr, for example). Identifying the optimal deterministic rule, or proving a stronger universal lower bound, remains open.

The next couple of research directions are about revisiting the models in this line of work.

\begin{openq}
How can we refine our frameworks for evaluating voting rules?
\end{openq}

An ideal framework would provide prescriptive advice to practitioners, for example, in political science or AI alignment. Considering probabilistic voting in the distortion model aligns the results with axiomatic intuitions in both the metric \cite{goyal2025metric} and utilitarian models.

However, there may be an opportunity to further refine these models. One limitation of the utilitarian model stems from the fact that while voters are assumed to compare alternatives on the \textit{difference} in utilities, the evaluation is on the \textit{ratio} of the social utilities. This creates an ``unstable'' evaluation criterion. 
For example, a clearly bad alternative (according to the ratio of utilities) may get a non-negligible number of votes. Consider the case with two alternatives, $W$ and $B$. If $p_{WB} = \frac{1}{2} - \delta$ for any constant $\delta >0,$ $\frac{\textsc{Util}_B}{\textsc{Util}_W}$ may be unbounded since $\textsc{Util}_W$ may be zero and $\textsc{Util}_B$ may be of the order $\frac{\delta n}{\beta}$. In this case, the difference $\textsc{Util}_B - \textsc{Util}_W$ is small and, arguably, more meaningful than the ratio. \citet{kahng2022worst} argues in favor of an additive distortion model instead of the one based on ratios. 
This example also demonstrates that the Weighted-Uncovered voting rule \citep{munagala2019improved} has unbounded distortion in this framework, which is counterintuitive, since it closely resembles \Cc. 

This problem is handled in the metric distortion model \citep{goyal2025metric},  by encoding a `scale-freeness' axiom and assuming that voters compare alternatives based on the \textit{ratio} of costs. However, the metric model has its own limitations.  \citet{stokes1963spatial} demonstrated over sixty years ago that spatial models fail to capture real electorate data. 
Other adjacent communities, including AI alignment via RLHF \cite{ouyang2022training, rafailov2023direct, casper2023open, siththaranjan2023distributional} and political science \cite{lijphart1994electoral, lijphart1999patterns, gallagher1991proportionality}, have distinct criteria for evaluating voting rules.
Combining insights from different research communities and creating a practically useful and theoretically grounded evaluation framework for voting rules is an interesting and timely problem.

\begin{openq}
Can the distortion framework be adapted to KL-constrained policy spaces?
\end{openq}
\citet{golz2025distortion} initiated the study of this question. A policy is a distribution over the candidates. The KL divergence bound restricts the set of policies to a ball $\mathcal{B}_{ref}$ around a reference policy $p_{ref}$. They demonstrate that \B has an exponential distortion in $\beta$ in this setting, whereas \ML has linear distortion. The proof constructs instances in which \B compares policies in $\mathcal{B}_{ref}$ primarily with candidates with small weights in $p_{ref} $, and infrequently with policies it has to compete with. Whereas \ML decides the winning policy by comparing policies within $\mathcal{B}_{ref}$ with each other. Since the evaluation criterion is the comparison with the best policy in $\mathcal{B}_{ref},$ \B has a very high distortion in this framework.
It is interesting to study cases where the distribution of comparison pairs does not exhibit a strong mismatch with $p_{ref}$. here \B will likely perform better.

\bibliographystyle{plainnat}
\bibliography{refs}

\appendix

\section{Proof of Theorem~\ref{thm:detTournamentGraph_lb}}
\label{app:universal-lb}

\begin{proof}
Fix a deterministic $\rho$-finite-precision tournament-based rule $f$. We construct three candidates $B, W, Y$ and a voter distribution such that the coarsened tournament forms a symmetric 3-cycle, forcing $f$ to select a low-welfare candidate.

\paragraph{Symmetry argument.}
Suppose the edge-weight matrix $W_\rho$ satisfies: $W_\rho(W,Y) = W_\rho(Y,B) = W_\rho(B,W) = \ell$ for some integer $\ell \ge 0$. We construct three election instances by cyclically permuting the assignment of the three candidates to the roles $(B, W, Y)$ in the utility construction below. Since all three instances produce the same symmetric $W_\rho$, the deterministic rule $f$ selects the same candidate index $k$ in each. In one of the three role assignments, candidate $k$ plays the role of $W$ (the low-welfare candidate). It therefore suffices to construct an instance with a symmetric coarsened tournament where $\textsc{Util}_B / \textsc{Util}_W \ge (\frac{5}{8} - \varepsilon)\beta$.

\paragraph{Three-type construction.}
Let $\gamma \in (0, 2\rho/3)$ and $\eta \in (0, 1)$ be parameters to be specified. Define three voter types with utilities for $(B, Y, W)$:
\begin{center}
\begin{tabular}{lccc}
& $u_B$ & $u_Y$ & $u_W$ \\[2pt]
\textbf{Type I} (fraction $p$): & $1 - \eta$ & $1$ & $0$ \\
\textbf{Type II} (fraction $q$): & $1$ & $0$ & $\delta/\beta$ \\
\textbf{Type III} (fraction $1-p-q$): & $0$ & $0$ & $\delta/\beta$
\end{tabular}
\end{center}
Here $\delta \in (0,1]$ is determined below. All utilities lie in $[0,1]$.

The logistic sigmoid function satisfies, for all $x > 0$ and $\beta > 0$:
\begin{equation}\label{eq:sigmoid-bounds}
1 - e^{-\beta x} \;\le\; \sigma_\beta(x) \;\le\; 1, \qquad 0 \;\le\; \sigma_\beta(-x) \;\le\; e^{-\beta x}.
\end{equation}
Define $s := \sigma_1(\delta) = (1 + e^{-\delta})^{-1} \in (\frac{1}{2}, 1)$. The population pairwise win-probabilities are:
\begin{align}
p_{W,Y} &= p \cdot \sigma_\beta(-1) + (1-p) \cdot s, \label{eq:pWY} \\
p_{Y,B} &= p \cdot \sigma_\beta(\eta) + q \cdot \sigma_\beta(-1) + \frac{1-p-q}{2}, \label{eq:pYB} \\
p_{B,W} &= p \cdot \sigma_\beta(1-\eta) + q \cdot \sigma_\beta(1 - \delta/\beta) + (1-p-q)(1-s). \label{eq:pBW}
\end{align}

\paragraph{Idealized system.}
We determine $p, q, \delta$ by considering the limiting system as $\beta \to \infty$, where $\sigma_\beta(-1) \to 0$, $\sigma_\beta(\eta) \to 1$, $\sigma_\beta(1-\eta) \to 1$, and $\sigma_\beta(1 - \delta/\beta) \to 1$. Requiring all three margins to equal $\frac{1}{2} + \gamma$:
\begin{align}
(1-p)s &= \frac{1}{2} + \gamma, \label{eq:ideal1} \\
p + \frac{1-p-q}{2} &= \frac{1}{2} + \gamma, \label{eq:ideal2} \\
p + q + (1-p-q)(1-s) &= \frac{1}{2} + \gamma. \label{eq:ideal3}
\end{align}
From \eqref{eq:ideal2}, we obtain $q = p - 2\gamma$. Substituting into \eqref{eq:ideal3} and combining with \eqref{eq:ideal1} yields:
\begin{equation}\label{eq:params}
p = \frac{6\gamma + 4\gamma^2}{1 + 6\gamma}, \qquad q = \frac{4\gamma - 8\gamma^2}{1 + 6\gamma}, \qquad s = \frac{\frac{1}{2} + 4\gamma + 6\gamma^2}{1 - 4\gamma^2}.
\end{equation}
We set $\delta = \ln\bigl(\frac{s}{1-s}\bigr)$. 

The true probabilities \eqref{eq:pWY}--\eqref{eq:pBW} deviate from the idealized values \eqref{eq:ideal1}--\eqref{eq:ideal3}. Using \eqref{eq:sigmoid-bounds}:
\begin{align*}
\bigl| p_{W,Y} - (1-p)s \bigr| &\le p \cdot e^{-\beta}, \\
\bigl| p_{Y,B} - \bigl(p + \tfrac{1-p-q}{2}\bigr) \bigr| &\le p \cdot e^{-\beta\eta} + q \cdot e^{-\beta}, \\
\bigl| p_{B,W} - \bigl(p + q + (1-p-q)(1-s)\bigr) \bigr| &\le p \cdot e^{-\beta(1-\eta)} + q \cdot e^{-\beta(1 - \delta/\beta)}.
\end{align*}
Define:
\[
E(\beta, \eta, \delta) \;:=\; e^{-\beta} + e^{-\beta\eta} + e^{-\beta(1-\eta)} + e^{-\beta + \delta}.
\]
Then $\bigl|p_{j,j'} - (\frac{1}{2} + \gamma)\bigr| \le E(\beta, \eta, \delta)$ for each pair $(j, j')$. For fixed $\eta \in (0,1)$ and $\delta \in (0,1)$, the function $E(\beta, \eta, \delta) \to 0$ as $\beta \to \infty$.

\paragraph{Coarsening via concentration.}
For the coarsened edge weights to be identical, it suffices that all three 
empirical margins $s_{j,j'}$ lie in the interval $[\frac{1}{2}, \frac{1}{2} + \rho)$.

First, choose $\beta_1 = \beta_1(\gamma, \eta, \delta)$ such that 
$E(\beta, \eta, \delta) < \gamma/4$ for all $\beta \ge \beta_1$. Then all 
population margins lie in 
$(\frac{1}{2} + \frac{3\gamma}{4}, \frac{1}{2} + \frac{5\gamma}{4})$.

Second, for $n$ voters drawn i.i.d., Hoeffding's inequality gives 
$\Pr(|s_{j,j'} - p_{j,j'}| > t) \le 2\exp(-2nt^2)$. Taking $t = \gamma/4$ 
and applying a union bound over the three pairs, for 
$n \ge 32\gamma^{-2}\ln(6/\alpha)$, all empirical margins lie within 
$\gamma/4$ of their population values with probability at least $1 - \alpha$. 
Combined with the population bound, all empirical margins fall in 
$(\frac{1}{2} + \frac{\gamma}{2}, \frac{1}{2} + \frac{3\gamma}{2})$. 
Since $\gamma > 0$, the lower endpoint exceeds $\frac{1}{2}$; 
since $\gamma < 2\rho/3$, the upper endpoint $\frac{1}{2} + \frac{3\gamma}{2} < \frac{1}{2} + \rho$. 
Hence $W_\rho(W,Y) = W_\rho(Y,B) = W_\rho(B,W) = 0$, and the coarsened 
tournament is symmetric.

\paragraph{Welfare ratio.}
The population utilities are:
\[
\textsc{Util}_B = p(1-\eta) + q, \qquad \textsc{Util}_W = (1-p) \cdot \frac{\delta}{\beta}.
\]
When $f$ outputs $W$, the distortion satisfies:
\begin{equation}\label{eq:distortion}
\frac{\textsc{Util}_B}{\textsc{Util}_W} = \beta \cdot \frac{p(1-\eta) + q}{\delta(1-p)}.
\end{equation}

\paragraph{Asymptotic analysis.}
We analyze the coefficient $C := \frac{p(1-\eta) + q}{\delta(1-p)}$ for small $\gamma$. From \eqref{eq:params}:
\[
p = 6\gamma + O(\gamma^2), \quad q = 4\gamma + O(\gamma^2), \quad 1 - p = 1 - 6\gamma + O(\gamma^2), \quad s = \frac{1}{2} + 4\gamma + O(\gamma^2).
\]
The Taylor expansion $\sigma_1(\delta) = \frac{1}{2} + \frac{\delta}{4} + O(\delta^3)$ combined with $s = \frac{1}{2} + 4\gamma + O(\gamma^2)$ yields $\delta = 16\gamma + O(\gamma^2)$. Substituting:
\[
C = \frac{(6\gamma)(1-\eta) + 4\gamma + O(\gamma^2)}{(16\gamma)(1 - 6\gamma + O(\gamma^2)) + O(\gamma^2)} = \frac{(10 - 6\eta)\gamma + O(\gamma^2)}{16\gamma + O(\gamma^2)} = \frac{5 - 3\eta}{8} + O(\gamma).
\]

\paragraph{Parameter selection.}
Given $\varepsilon > 0$, choose parameters as follows. Set 
$\eta = \frac{4\varepsilon}{3}$, so that $\frac{5 - 3\eta}{8} = \frac{5}{8} - \frac{\varepsilon}{2}$. 
Choose $\gamma_0 > 0$ small enough that: (i) the $O(\gamma)$ remainder 
satisfies $|O(\gamma)| \le \frac{\varepsilon}{4}$ for $\gamma \le \gamma_0$, 
(ii) all feasibility conditions on $p, q, s, \delta$ hold, and 
(iii) $\gamma_0 < 2\rho/3$. Set $\gamma = \gamma_0$, which determines 
$\delta$ via \eqref{eq:params}. Finally, choose $\beta_0$ large enough that 
$E(\beta_0, \eta, \delta) < \gamma/4$.
For all $\beta \ge \beta_0$:
\[
\frac{\textsc{Util}_B}{\textsc{Util}_W} \;\ge\; \beta \left(\frac{5}{8} - \frac{\varepsilon}{2} - \frac{\varepsilon}{4}\right) \;>\; \left(\frac{5}{8} - \varepsilon\right)\beta.
\]
Combined with the symmetry argument, this completes the proof.
\end{proof}

\section{Proofs of Theorem~\ref{thm:pluV_lb} and Corollary~\ref{corr:ppv}}
\label{app:pluv}

\textbf{Proof of \PluV Lower Bound Theorem~\ref{thm:pluV_lb}}
\begin{proof}
We construct a distribution $\mathcal{V}$ over two voter types, each with probability $\frac{1}{2}$.

\emph{Type 1 voters (probability $\frac{1}{2}$):} $u_{i,B} = 0$ and $u_{i,j} = \frac{2}{\beta \ln m}$ for all $j \neq B$. Let $V'$ be the set of Type 1 voters.

\emph{Type 2 voters (probability $\frac{1}{2}$):} $u_{i,B} = \frac{1}{2}$. Additionally, type 2 voters are partitioned into $\binom{m-1}{\lfloor \frac{m-1}{\beta \ln m} \rfloor}$ sets, each with probability $\frac{1}{2}\times\frac{1}{\binom{m-1}{\lfloor \frac{m-1}{\beta \ln m} \rfloor}}$. Each set corresponds to a subset $S_i \subset C \setminus \{B\}$ of size $\lfloor \frac{m-1}{\beta \ln m} \rfloor$ such that $u_{i,j} = 1$ for $j \in S_i$ and $u_{i,j} = 0$ for $j \notin S_i \cup \{B\}$. For each candidate $j \neq B,$ $u_{i,j}=1$ for at most a $\frac{1}{\beta \ln m}$ fraction of Type 2 voters. Let $\bar{V}'$ be the set of Type 2 voters.

  Observe that $\textsc{Util}_{B} = \frac{n}{4}.$ By symmetry, for all other candidates, the social utility is at most $\frac{3n}{2\beta \ln m}.$ 
  
  If $B$ loses, the distortion is at least $\frac{\beta \ln m}{6}.$

  We now show that $B$ loses the election under \PluV with probability 1 as $n\rightarrow \infty$.

Recall that the probability that $B$ is at the top of voter $i$'s ranking is: 
$$\probability(\top_i = B) = \frac{e^{\beta u_{i,B}}}{\sum\limits_{j \in C} e^{\beta u_{i,j}}}.$$ 
We set $ \beta \leq \frac{m}{3\ln m}.$ Further, we set $\beta  \geq 2\ln \beta + 2 \ln \ln m.$ Summing over all voters for $B$'s top votes:
\begin{align}
    \sum_{i \in V} \probability(\top_i = B) &= \sum_{i \in V'} \probability(\top_i = B) + \sum_{i \in \bar{V}'} \probability(\top_i = B) \\
    &= \frac{n}{2} \frac{1}{1 + (m-1) e^{\frac{2}{\ln m}}} 
    + \frac{n}{2} \frac{e^{\frac{\beta}{2}}} {e^{\frac{\beta}{2}} 
    +\lfloor \frac{m-1}{\beta \ln m}\rfloor e^{\beta} + m-1 -\lfloor \frac{m-1}{\beta \ln m}\rfloor } \\
    \intertext{Since $e^{\frac{2}{\ln m}} > 1$ we have $1 + (m-1) e^{\frac{2}{\ln m}} \geq m.$ Also, since $\beta \leq \frac{m}{3\ln m},$ and $m\geq10,$ we have $\lfloor \frac{m-1}{\beta \ln m}\rfloor \geq \frac{2m}{3\beta \ln m}.$ Therefore,   }
    &\leq \frac{n}{2m}   + \frac{3n  e^{-\frac{\beta}{2}} \beta \ln m }{4m}. \\
    \intertext{Since $\beta  \geq 2\ln \beta + 2 \ln \ln m,$ we have $e^{-\frac{\beta}{2}} \beta \ln m \leq 1$. This implies,}
    \sum_{i \in V} \probability(\top_i = B) &\leq \frac{5n}{4m}. \label{eq:top_ub}
\end{align}

Whereas, for the bottom votes for $B$, which lead to vetoes, we only need to account for the votes from voters in $V'.$ Let $\bot_i$ denote the bottom-ranked candidate in voter $i$'s ranking.

\begin{align}
    \sum_{i \in V} \probability(\bot_i = B) &\geq  \sum_{i \in V'} \probability(\bot_i = B)  
    = \frac{n}{2} \prod_{k = 1}^{m-1} \frac{(m-k) e^{\frac{2}{\ln m}}} {(m-k) e^{\frac{2}{\ln m}} + 1} \\
    &=  \frac{n}{2} \prod_{k = 1}^{m-1} \frac{1} {1 + \frac{e^{-\frac{2}{\ln m}}}{(m-k)}} \\
    &\geq \frac{n}{2} \exp\left(- \sum_{k=1}^{m-1} \ln \left(1+  \frac{e^{-\frac{2}{\ln m}}}{k} \right) \right)\\
    &\geq \frac{n}{2} \exp\left(- \sum_{k=1}^{m-1}  \left( \frac{e^{-\frac{2}{\ln m}}}{k} \right) \right)\\
    &\geq \frac{n}{2} \exp\left(- e^{-\frac{2}{\ln m}} (\ln m +1) \right)\\
    &= \frac{n}{2} \exp\left( \ln \left((em)^{- e^{-\frac{2}{\ln m}}} \right)\right)\\
    &=  \frac{n}{2}  \frac{1}{(em)^{e^{-\frac{2}{\ln m}}} }
\end{align}

For $B$ to lose the election, a sufficient condition is that it gets more bottom votes than top votes. This is because the all of $B's$ tokens are removed before the final round. Note that in this scenario, the sequence of elimination is irrelevant because when the algorithm reaches a voter with bottom vote of $B,$ it removes one token from $B$ regardless of the previous operations.
Therefore a sufficient condition is:
\begin{align}
     \frac{n}{2}  \frac{1}{(em)^{e^{-\frac{2}{\ln m}}} }  &\geq \frac{5n}{4m}. \\
     \implies    \frac{1}{(em)^{e^{-\frac{2}{\ln m}}} }  &\geq \frac{5}{2m}. \\
     \implies    m^{1 - e^{-\frac{2}{\ln m}}} &\geq e^{ e^{-\frac{2}{\ln m}}} \left(\frac{5}{2} \right). \\
     \implies    (\ln m) (1 - e^{-\frac{2}{\ln m}}) &\geq  e^{-\frac{2}{\ln m}} + \ln  \left(\frac{5}{2} \right).
\end{align}
The last inequality holds for $m\geq 10$. Therefore, $B$ loses the election. This completes the proof.
\end{proof}

This same construction also gives us an identical lower bound for \PPVv. \\

\textbf{Proof of \PPV Lower Bound Corollary~\ref{corr:ppv}}
\begin{proof}

Recall \PPV is defined as follows. 
For any positive constant $\alpha,$
consider only those candidates who have at least $\frac{\alpha n}{(6+\alpha)m}$ top votes. Call this set $C_{\textsc{pruned}}.$ Project the rankings $\pi^n$ to $C_{\textsc{pruned}}$. Call it $\pi^n_{\textsc{pruned}}.$
Run \PluV on $\pi^n_{\textsc{pruned}}.$

   For all $m\geq 10,$ we have $\sum_{i \in V} \probability(\top_i = B) \geq \sum_{i \in V'} \probability(\top_i = B)  =  \frac{n}{2} \frac{1}{1 + (m-1) e^{\frac{2}{\ln m}}} \geq \frac{n}{2em}.$  
   By Equation~\eqref{eq:top_ub}, all other candidates, by symmetry, also get at least $ \frac{n - \frac{5n}{4m}}{m-1} \geq \frac{n}{2m}$ top votes in expectation. 
   Therefore, for any $\alpha \leq 1.3,$ no candidates are removed in pruning, and $C_{\textsc{pruned}} = C.$ The same distortion as $\PluV$ is attained.
\end{proof}

\end{document}